\definecolor{bl}{rgb}{0.0,0.2,0.6} 
\newcommand{\bxi}{\ensuremath{\boldsymbol{\xi}}}
\newcommand{\Bi}{\ensuremath{\mathrm{Bi}}}
\numberwithin{equation}{section}
\numberwithin{figure}{section}
\theoremstyle{plain}
\newtheorem{thm}{\theoremname}
\theoremstyle{definition}
\theoremstyle{remark}
\newtheorem{rem}[thm]{\remarkname}
\theoremstyle{plain}
\newtheorem{lem}[thm]{\lemmaname}
\newtheorem{prop}[thm]{\propname}
\newtheorem{cor}[thm]{\corname}
\providecommand{\lemmaname}{Lemma}
\providecommand{\exmpname}{Example}
  \providecommand{\remarkname}{Remark}
  \providecommand{\examplename}{Example}
\providecommand{\theoremname}{Theorem}
\providecommand{\propname}{Proposition}
\providecommand{\corname}{Corollary}
\providecommand{\conj}{Conjecture}
\renewcommand*{\P}{\operatorname{P}}
\newcommand*{\E}{\mathrm{E}}
\newcommand*{\Var}{\operatorname{Var}}
\newcommand*{\DR}{\mathbb{R}}
\def\printtitle{
    {\color{bl} \centering \huge \sc \textbf{\@title}\par}}		
\title{Multilevel path simulation for  weak approximation schemes 
\vspace{10pt}
}
\def\printauthor{
    {\centering \small \@author}}				
\author{%
	Denis Belomestny\footnote{\footnotesize This research was partially supported by the Deutsche
      Forschungsgemeinschaft through the SPP 1324 ``Mathematical methods for extracting quantifiable     information from complex systems'' and   by
Laboratory for Structural Methods of Data Analysis in Predictive Modeling, MIPT, RF government grant, ag. 11.G34.31.0073.} and Tigran Nagapetyan \\
	Duisburg-Essen University, National Research University Higher School of Economics,
 Weierstrass Institute for Applied Analysis and Stochastics\\
	\vspace{20pt}
	}
\begin{document}
\printtitle

\printauthor
\begin{abstract}
In this paper we discuss the possibility of using multilevel Monte Carlo  (MLMC)  methods for weak approximation schemes.
It turns out that by means of a simple coupling between consecutive time discretisation levels, one can achieve the same complexity gain as under the presence of  a strong convergence. We exemplify this general idea in the case of  weak Euler scheme for L\'evy driven stochastic differential equations, and show that, given a weak convergence of order \(\alpha\geq 1/2,\)   the complexity of the corresponding ``weak'' MLMC estimate   is of order \(\varepsilon^{-2}\log ^{2}(\varepsilon).\)  The numerical performance of the new ``weak''    
MLMC method is illustrated by several numerical examples. 
\end{abstract}

\section{Introduction}
The multilevel path simulation method introduced in Giles \cite{giles2008multilevel} has gained huge  popularity as a complexity reduction tool in 
recent times. The main advantage of  the MLMC methodology is that it can be simply applied to various situations and 
requires almost no prior knowledge on the path generating process. Any multilevel Monte Carlo (MLMC) algorithm uses a number of levels of resolution, \(l=0,1,\ldots, L,\) with \(l=0\) being the coarsest, and \(l=L\) being the finest. In the context of a SDE simulation on the interval \([0,T]\), level \(0\) corresponds to one timestep \(\Delta_0=T,\) whereas the level \(L\)  has \(2^L\) uniform timesteps \(\Delta_L=2^{-L}T.\) 
\par
Assume that a filtered probability space \((\Omega,\mathcal{F},\P,(\mathcal{F}_t))\) is given.    Consider now a \(d\)-dimensional process \((X_t)\) solving the following L\'evy driven SDE
\begin{eqnarray}
\label{sde}
X_t=X_0+\int_0^t a(X_{s-})\,dL_s,
\end{eqnarray}
where  \(X_0\) is a \(\mathbb{R}^d\)-valued random variable, \(L_t=(L^1_t,\ldots,L^m_t),\) \(t\geq 0\)  is a  \(m\)-dimensional L\'evy process and  the mapping \(a:\) \(\DR^d\times \DR^m\mapsto\DR^d\)  is Lipschitz continuous and has at most linear growth on \([0,T]\) so that the solution of \eqref{sde} is well defined.
Our aim is to estimate the expectation \(\E[f(X_T)],\) where \(f\) is a Lipschitz continuous function from \(\DR^d\) to \(\DR.\) Let \(X^l_T\) be an approximation for \(X_T\) by means of a numerical discretisation with time step \(\Delta_l\) (for various discretisation methods for \eqref{sde} see, e. g. Platen and Bruti-Liberati~\cite{platen2010numerical} or the recent review of Jourdain and Kohatsu-Higa \cite{jourdain2011review}). 
The main idea of the multilevel approach pioneered in Giles \cite{giles2008multilevel}  consists in writing the expectation of the finest approximation \(\E[f(X^L_T)]\) as a telescopic sum
\begin{eqnarray*}
\E[f(X^L_T)]=\E[f(X^0_T)]+\sum_{l=1}^L \E[f(X^l_T)-f(X^{l-1}_T)]
\end{eqnarray*}
and then applying  Monte Carlo  to estimate each expectation in the above telescopic sum.
One important prerequisite for MLMC to work is that \(X^l_T\) and \(X^{l-1}_T\) are coupled in some way and this can be achieved by using the same discretised trajectories of the underlying L\'evy processes to construct  the consecutive approximations \(X^l_T\) and \(X^{l-1}_T.\) The degree of coupling is usually measured in terms of the variance \(\Var[f(X^l_T)-f(X^{l-1}_T)]\). 
It is shown in Giles \cite{giles2008multilevel} (see also Giles and Xia \cite{giles2014multilevel}), that under the assumptions 
\begin{eqnarray}
\label{ml_ass}
\bigl|\E[f(X^L_T)]-\E[f(X_T)]\bigr|\leq c_1\Delta_L^{\alpha},\quad \Var\left[f(X^l_T)-f(X^{l-1}_T)\right]\leq c_2 \Delta_l^{\beta}, 
\end{eqnarray}
with some \(\alpha\geq 1/2,\) \(\beta>0,\) \(c_1>0\) and \(c_2>0,\) the computational complexity of the resulting multilevel estimate needed to achieve the accuracy \(\varepsilon\) (in terms of RMSE)  is proportional to 
\begin{eqnarray*}
\mathcal{C}=
\begin{cases}
\varepsilon^{-2}, & \beta>1, \\
\varepsilon^{-2}\log^2(\varepsilon), & \beta=1, \\
\varepsilon^{-2-(1-\beta)/\alpha}, & 0<\beta <1.
\end{cases}
\end{eqnarray*}
The standard way of checking the assumptions \eqref{ml_ass} is to prove that the underlying approximation scheme has   weak convergence of order \(\alpha\) and strong convergence of order \(\beta/2.\) Indeed, in the latter case  we have for any Lipschitz continuous function \(f,\)
\begin{align*}
\Var\left[f(X^l_T)-f(X^{l-1}_T)\right]&\leq c_f\E\left[\bigl|X^l_T-X_T\bigr|^2\right] +c_f\E\left[\bigl|X^{l-1}_T-X_T\bigr|^2\right]\\ &\leq 2c_f\Delta_l^{\beta}
\end{align*}
with some constant \(c_f>0\) depending on \(f.\) 
However, in recent years the so-called weak approximation schemes, i.e., schemes that, in general, fulfil only the first assumption in \eqref{ml_ass} became quite popular. 
The weak Euler scheme is a first-order scheme with \(\alpha=1\), and has been studied by many researchers. Talay and Tubaro \cite{talay1990expansion} show the first-order convergence of the weak Euler scheme. The fact that the convergence rate of the Euler scheme also holds for certain irregular functions  under a H\"ormander type condition has been proved by Bally and Talay \cite{bally1995euler} using Malliavin calculus.  The It\^o-Taylor (weak-Taylor) high-order scheme is a natural extension of the weak Euler scheme. In the continuous diffusion case, some new discretization schemes (also called Kusuoka type schemes) which are of order \(\alpha\geq 2\) without the Romberg extrapolation have been introduced by Kusuoka \cite{kusuoka2004approximation}, Lyons and Victoir \cite{lyons2004cubature}, Ninomiya and Victoir \cite{ninomiya2008weak}, and Ninomiya and Ninomiya \cite{ninomiya2009new}. A general class of weak approximation methods, comprising  many well known discretisation schemes, was constructed in Kohatsu-Higa and Tanaka \cite{tanaka2009operator}.
The main advantage of the weak approximation schemes is that simple discrete random variables can be used instead of the L\'evy increments. Unfortunately, due to the absence of the strong convergence, the MLMC methodology can not be  directly used with the weak approximation schemes.
In this paper we make an attempt to overcome this difficulty and develop a kind of ``weak'' MLMC approach which can be applied to various weak approximation schemes.
\par
The plan of the paper is as follows. First,  we recall the  Euler scheme for \eqref{sde} and discuss its convergence properties. Next we show how to construct the corresponding  MLMC algorithm, which is able to reduce the complexity of the standard MC to  order \(\varepsilon^{-2}\log^2(\varepsilon)\) under only requirement that the Euler scheme converges weakly.  Finally, we analyse the numerical performance of the presented weak MLMC algorithms.
\section{Euler scheme for L\'evy driven SDE}
Fix some \(n\in \mathbb{N}\) and set \(\Delta=T/n.\)  Denote \(\Delta L_j=L_{j\Delta}-L_{(j-1)\Delta},\) \(j=1,\ldots, n.\)  For a fixed random vector \(X_0, \) the  Euler scheme  for \eqref{sde} reads as follows
\begin{eqnarray}
\label{euler_scheme}
X^{\Delta}_0&=&X_0,
\\
\nonumber
X^{\Delta}_{j\Delta}&=&X^{\Delta}_{(j-1)\Delta}+a\bigl(X^{\Delta}_{(j-1)\Delta}\bigr)\,\Delta L_j, \quad j=1,\ldots, n.
\end{eqnarray}
The convergence of the scheme \eqref{euler_scheme} was extensively studied in the literature. The first convergence result is due to Talay and Tubaro \cite{talay1990expansion}, who proved that in the case of a diffusion processes with \(L\) being a Brownian motion plus drift, the scheme weakly converges  with order \(1.\) In the case of the general L\'evy processes, the convergence of \eqref{euler_scheme} was studied in Protter and Talay \cite{protter1997euler}, where it is shown that, under some  assumption on the function \(a\) and the driving L\'evy process \(L\), the weak convergence rate \(1/n\) can  be recovered.  In fact, the main drawback of the scheme \eqref{euler_scheme} is the necessity to sample from the distribution of \(\Delta L_j\) exactly. Although such exact sampling can be possible for particular L\'evy processes (see \cite{protter1997euler} for some examples), in general this turns out to be a hard numerical problem. This is why Jacod et al \cite{jacod2005approximate} proposed to replace the increments \(\Delta L_j\) of the original L\'evy process  by simple random vectors \(\zeta_j\) which are easy to simulate.  It is shown in \cite{jacod2005approximate} that if the distributions of \(\Delta L_j\) and \(\zeta_j\) are sufficiently close, then the weak convergence rate \(1/n\) continues to hold.
These results on weak convergence should be compared with ones on pathwise or strong convergence.
In fact, the strong convergence rates usually depend on the characteristics of the L\'evy process \(L.\) 
For example, Rubenthaler \cite{rubenthaler2003numerical} studied the strong error when neglecting small jumps. He obtains the estimate of the form
\begin{eqnarray*}
\E\left[\max_{j=0,\ldots, n}|X^{\Delta}_{j\Delta}-X_{j\Delta}|^2\right]\lesssim \left(n^{-1}+\int_{|z|\leq \epsilon} z^2\nu(dz)\right)
\end{eqnarray*}
with \(\nu\) being the L\'evy measure of \(L.\) So the rates become quite poor if \(\nu\) diverges at zero like \(z^{-\alpha}\)
with \(\alpha\) close to \(2.\) Recently, Fournier \cite{fournier2011simulation} has proposed a coupling method which allows to get better rates of pathwise convergence in a one-dimensional case. He constructed an approximation \(X^{n,\epsilon},\) satisfying
 \begin{eqnarray*}
\E\left[\max_{j=0,\ldots, n}|X^{{n,\epsilon}}_{j\Delta}-X_{j\Delta}|^2\right]\lesssim \left(n^{-1}+n\,\frac{m_{4,\epsilon}(\nu)}{m_{2,\epsilon}(\nu)}\right)
\end{eqnarray*}
with \(m_{k,\epsilon}(\nu)=\int_{|z|\leq \epsilon} |z|^k\nu(dz).\) The approximation \(X^{n,\epsilon}\) is constructed by replacing the jumps of \(L\) smaller than \(\epsilon\) by an independent Brownian motion.   In order to  prove  a bound for the Wasserstein distance between \(X\) and \(X^{n,\epsilon},\)   a suitable coupling was used. Note that since \(X\) is unknown, such coupling is not implementable. A similar coupling idea in the multidimensional setting was used in Dereich \cite{dereich2011multilevel} to design a multilevel path simulation approach for \eqref{sde}.   
\section{Multilevel path simulation for weak Euler scheme}
In order to  successfully apply the multilevel approach, one needs to ensure that \eqref{ml_ass} hold. If the scheme \eqref{euler_scheme}
has strong convergence of order \(\beta/2,\) i.e.,
\begin{eqnarray*}
\E\left[\max_{j=0,\ldots, n}|X^{\Delta}_{j\Delta}-X_{j\Delta}|^2\right]\lesssim \Delta^{\beta},
\end{eqnarray*}
then the conditions \eqref{ml_ass}  hold with \(\alpha=\beta/2.\) However, if some approximations \(\zeta_j,\) \(j=1,\ldots, n,\) are used instead of the genuine increments \(\Delta L_j,\)  strong convergence is not any longer guaranteed.
Here we propose a general approach how to couple two consecutive approximations of \(X\) in order to guarantee  that the second condition in  \eqref{ml_ass} still holds with \(\beta=1.\) In fact, this would lead to a complexity estimate   \(\varepsilon^{-2}\log^2(\varepsilon),\) does not matter how small is \(\alpha\geq 1/2.\)
\subsection{Coupling idea}
\label{coupling-section}
Let us fix two natural numbers \(n_c\) (``coarse'' discretisation level) and \(n_f\) (``fine'' discretisation level) with \(n_f=2\cdot n_c\) and set \(\Delta_c=T/n_c,\) \(\Delta_f=T/n_f.\) 
In order to couple the Euler approximations \(X^{\Delta_c}\) and \(X^{\Delta_f},\) we are going to couple the random matrices \(\boldsymbol{\zeta}^c\doteq (\zeta^c_{1},\ldots, \zeta^c_{n_c})\in \DR^{n_c}\otimes\DR^{m} \) and \(\boldsymbol{\zeta}^f\doteq (\zeta^f_{1},\ldots, \zeta^f_{n_f})\in \DR^{n_f}\otimes\DR^{m}.\) We define the approximation \(\boldsymbol{\zeta}^c\) for the increments on the coarse level in such a way that the differences
\begin{eqnarray}
\label{coupling}
\zeta^c_{j}-\zeta^f_{2j-1}-\zeta^f_{2j}, \quad j=1,\ldots, n_c.
\end{eqnarray}
are small. In particular, we can take \(\zeta^c_{j}=\zeta^f_{2j-1}+\zeta^f_{2j}.\)
The idea behind this coupling is very simple: in the case of the genuine L\'evy increments we would get 
\[
\Delta^f L_{2j-1}+\Delta^f L_{2j}=L_{2j\Delta}-L_{2(j-1)\Delta}=\Delta^c L_{j}.
\]
Suppose that \(\zeta^f_{1},\ldots, \zeta^f_{n_f} \) are i.i.d. random vectors with moments \(m_{f,1}\doteq\bigl\|\E[\zeta^f_{j}]\bigr\|\) and \(m_{f,2}\doteq\E\bigl[\|\zeta^f_{j}\|^2\bigr].\) The following proposition holds.
\begin{prop}
\label{main_prop}
Suppose that the coefficient function \(a\) in \eqref{sde} is uniformly Lipschitz and has at most linear growth, i.e.,
\begin{eqnarray}
\label{lip}
\|a(x)-a(x')\|\leq L_a\,\|x-x'\|,\quad 
\|a(x)\|^2\leq B^2_a \, (1+\|x\|^2)
\end{eqnarray}
for any \(x,x'\in \DR^d\)  and some positive constants \(L_a\) and \(B_a.\) Denote \(\mathcal{R}_j\doteq\zeta^c_{j}-\zeta^f_{2j-1}-\zeta^f_{2j}\) and suppose that \(\mathcal{R}_j,\) \(j=1,\ldots, n_c,\) are zero mean i. i. d. random vectors. Moreover, assume that \(\E[\|X_0\|^2]<\infty,\) then the following estimate holds
\begin{multline}
\label{main_estimate}
\E\left[\max_{j=0,\ldots, n_c}\bigl|X^{\Delta_f}_{j\Delta}-X^{\Delta_c}_{j\Delta}\bigr|^2\right]\leq  c_1 \left(n_{f} m_{f,2}^{2}+n^2_f m_{f,1}^{2}m_{f,2}+n_f\E\bigl[\|\mathcal{R}_1\|^2\bigr]\right)
\\
\times \exp\left[c_2\left(n_{f}m_{f,2}+n_{f}^{2}m_{f,1}^{2}\right)\right]
\end{multline}
for some constants \(c_1>0,\)  \(c_2>0\) depending on \(L_a\) and \(B_a.\) 
\end{prop}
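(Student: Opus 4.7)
The plan is to derive a recursion for the error $e_j := X^{\Delta_f}_{j\Delta_c} - X^{\Delta_c}_{j\Delta_c}$ at the common grid points $j\Delta_c=2j\Delta_f$, $j=0,\ldots,n_c$, and then to close it via a discrete Gronwall argument. Expanding two consecutive fine Euler steps against the single coarse step and using the coupling identity $\zeta^c_j = \zeta^f_{2j-1}+\zeta^f_{2j}+\mathcal{R}_j$, one obtains
\begin{align*}
e_j - e_{j-1} &= \bigl[a(X^{\Delta_f}_{(j-1)\Delta_c}) - a(X^{\Delta_c}_{(j-1)\Delta_c})\bigr]\zeta^f_{2j-1}\\
&\quad + \bigl[a(X^{\Delta_f}_{(2j-1)\Delta_f}) - a(X^{\Delta_c}_{(j-1)\Delta_c})\bigr]\zeta^f_{2j} - a(X^{\Delta_c}_{(j-1)\Delta_c})\,\mathcal{R}_j,
\end{align*}
with the nested intermediate state $X^{\Delta_f}_{(2j-1)\Delta_f} = X^{\Delta_f}_{(j-1)\Delta_c} + a(X^{\Delta_f}_{(j-1)\Delta_c})\zeta^f_{2j-1}$.

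I would then split every fine increment into mean and centred parts, $\zeta^f_k = \bar\zeta + \tilde\zeta^f_k$ with $\|\bar\zeta\|=m_{f,1}$, and exploit the filtration $\mathcal{F}_k := \sigma(X_0,\zeta^f_1,\ldots,\zeta^f_{2k})$. Summands involving $\tilde\zeta^f_k$ or $\mathcal{R}_k$ are $\mathcal{F}$-martingale increments, so by Doob's $L^2$ maximal inequality their contribution to $\E[\max_{k\leq j}|e_k|^2]$ is bounded by a per-step sum $\sum_{k\leq j}\E[\|\cdot\|^2]$. Combined with \eqref{lip}, the centred terms produce a Gronwall coefficient of order $m_{f,2}$ on $\sum_{k} \E[\max_{l\leq k-1}|e_l|^2]$, together with the free prefactor $n_f\,\E[\|\mathcal{R}_1\|^2]$ from the coupling residuals, while the deterministic $\bar\zeta$-parts, estimated by Cauchy--Schwarz over the $n_c$-length sum, contribute a coefficient of order $n_f m_{f,1}^2$ to the Gronwall term.

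The remaining prefactor terms $n_f m_{f,2}^2$ and $n_f^2 m_{f,1}^2 m_{f,2}$ come from the nested piece $[a(X^{\Delta_f}_{(2j-1)\Delta_f})-a(X^{\Delta_f}_{(j-1)\Delta_c})]\zeta^f_{2j}$: the Lipschitz and linear-growth bounds control this by $L_a B_a(1+\|X^{\Delta_f}_{(j-1)\Delta_c}\|)\|\zeta^f_{2j-1}\|\|\zeta^f_{2j}\|$, and once one splits $\zeta^f_{2j}=\bar\zeta+\tilde\zeta^f_{2j}$, applying Doob to the $\tilde\zeta^f_{2j}$-part yields the $n_f m_{f,2}^2$ contribution, whereas Cauchy--Schwarz on the $\bar\zeta$-part yields the $n_f^2 m_{f,1}^2 m_{f,2}$ contribution. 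An a priori second-moment bound $\sup_{k\leq n_f}\E\|X^{\Delta_f}_{k\Delta_f}\|^2\leq c_0$, obtained by the same mean/centred decomposition applied directly to the fine scheme together with a preliminary Gronwall step using $\E\|X_0\|^2<\infty$, is needed to convert the factor $B_a^2(1+\|X^{\Delta_f}\|^2)$ into a numerical constant. Assembling everything, $E_j:=\E[\max_{k\leq j}|e_k|^2]$ satisfies
$$E_j \leq C_1\bigl(n_f m_{f,2}^2+n_f^2 m_{f,1}^2 m_{f,2}+n_f\E[\|\mathcal{R}_1\|^2]\bigr) + C_2\bigl(m_{f,2}+n_f m_{f,1}^2\bigr)\sum_{k=0}^{j-1} E_k,$$
and discrete Gronwall across $j\leq n_c$ turns the linear tail into the exponential factor $\exp[c_2(n_f m_{f,2}+n_f^2 m_{f,1}^2)]$, yielding \eqref{main_estimate}.

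The main technical obstacle is keeping the martingale and the drift contributions strictly separated throughout the decomposition: treating them uniformly by the triangle inequality would replace the sharp $n_f m_{f,2}$ inside the exponent by the strictly larger $n_f^2 m_{f,2}$, destroying the scaling required in the multilevel application, where $n_f\to\infty$ while the moments $m_{f,k}$ simultaneously shrink with the step size. A secondary care point is the filtration-level measurability of $X^{\Delta_f}_{(2j-1)\Delta_f}$ in the nested step, which forces the second split $\zeta^f_{2j}=\bar\zeta+\tilde\zeta^f_{2j}$ to be performed \emph{before} any use of independence or martingale arguments.
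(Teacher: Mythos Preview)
Your proposal is correct and follows essentially the same route as the paper: the same recursion for $e_j=X^{\Delta_f}_{j\Delta_c}-X^{\Delta_c}_{j\Delta_c}$, the same mean/centred split of the fine increments producing a martingale part (handled via Doob) and a drift part (handled via Cauchy--Schwarz), the same a priori second-moment bound for the fine scheme, and the same discrete Gronwall closure. The only cosmetic difference is that the paper bounds $\|a(X^{\Delta_f}_{(2r-1)\Delta_f})-a(X^{\Delta_c}_{(r-1)\Delta_c})\|$ directly via the triangle inequality rather than first isolating the nested piece $[a(X^{\Delta_f}_{(2j-1)\Delta_f})-a(X^{\Delta_f}_{(j-1)\Delta_c})]\zeta^f_{2j}$ as you do; also note that your filtration $\mathcal{F}_k=\sigma(X_0,\zeta^f_1,\ldots,\zeta^f_{2k})$ must be enlarged to include $\zeta^c_1,\ldots,\zeta^c_k$ (equivalently the residuals $\mathcal{R}_1,\ldots,\mathcal{R}_k$) so that $a(X^{\Delta_c}_{(j-1)\Delta_c})$ is $\mathcal{F}_{j-1}$-measurable and the $\mathcal{R}_j$-terms genuinely form martingale increments.
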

\begin{cor}
	\label{cor_main}
If \(m_{f,2}=O(\Delta_f),\)  \(m_{f,1}=O(\Delta_f)\) and \(\E\bigl[\|\mathcal{R}_1\|^2\bigr]=O(\Delta^2_f)\) for \(\Delta_f\to 0,\) then
\begin{eqnarray*}
\E\left[\max_{j=0,\ldots, n_c}\Bigl|X^{\Delta_f}_{j\Delta}-X^{\Delta_c}_{j\Delta}\Bigr|^2\right]=O(\Delta^f), \quad \Delta^f\to 0.
\end{eqnarray*}
\end{cor}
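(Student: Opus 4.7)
The corollary is a direct asymptotic consequence of Proposition \ref{main_prop}, so my plan is simply to substitute the hypothesised rates into the bound \eqref{main_estimate} and verify that everything collapses to $O(\Delta_f)$. No new probabilistic argument should be needed; the only ``obstacle'' is the bookkeeping of the powers of $\Delta_f$, which is easy because $n_f = T/\Delta_f$.

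Concretely, I would begin by recalling that $n_f \Delta_f = T$ is a constant, and then inspect the prefactor and the exponent in \eqref{main_estimate} separately. For the exponent, observe that $n_f m_{f,2} = O(n_f \Delta_f) = O(1)$ and $n_f^2 m_{f,1}^2 = O(n_f^2 \Delta_f^2) = O(1)$, so the exponential factor $\exp[c_2(n_f m_{f,2} + n_f^2 m_{f,1}^2)]$ is bounded by a constant depending only on $T$, $L_a$, and $B_a$ as $\Delta_f \to 0$. This is the key observation: the quadratic growth inside the exponential is precisely cancelled by the quadratic decay of $m_{f,1}^2$.

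For the polynomial prefactor, I would treat each of the three summands term by term:
\begin{align*}
n_f m_{f,2}^2 &= O(\Delta_f^{-1} \cdot \Delta_f^2) = O(\Delta_f),\\
n_f^2 m_{f,1}^2 m_{f,2} &= O(\Delta_f^{-2} \cdot \Delta_f^2 \cdot \Delta_f) = O(\Delta_f),\\
n_f \E\bigl[\|\mathcal{R}_1\|^2\bigr] &= O(\Delta_f^{-1} \cdot \Delta_f^2) = O(\Delta_f).
\end{align*}
Hence the prefactor itself is $O(\Delta_f)$, and multiplying by the bounded exponential yields the claimed rate.

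Putting these two pieces together via \eqref{main_estimate} gives
\[
\E\left[\max_{j=0,\ldots,n_c}\bigl|X^{\Delta_f}_{j\Delta}-X^{\Delta_c}_{j\Delta}\bigr|^2\right] = O(\Delta_f), \quad \Delta_f \to 0,
\]
as required. The whole proof is essentially one line of substitution; no genuine difficulty arises, and the delicate analytic work has already been carried out in Proposition \ref{main_prop}.
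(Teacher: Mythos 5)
Your proposal is correct and is exactly the intended argument: the paper states Corollary \ref{cor_main} as an immediate consequence of Proposition \ref{main_prop} without a separate written proof, and the substitution $n_f=T/\Delta_f$ into \eqref{main_estimate} — showing the exponential factor is $O(1)$ and each prefactor term is $O(\Delta_f)$ — is precisely the bookkeeping the authors leave to the reader. Nothing further is needed.
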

\paragraph{Discussion}
First note that the conditions for \eqref{main_estimate} to hold are formulated not in terms of the original increments \((\Delta L_j),\) but rather in terms of their approximations  \((\zeta_j).\) For the case of the exact increments, we obviously have \(m_{f,2}=O(\Delta_f)\) and \(m_{f,1}=O(\Delta_f),\) provided 
\begin{eqnarray*}
\int_{\mathbb{R}^d} \|z\|^2\,\nu(dz)<\infty,
\end{eqnarray*}
where \(\nu\) is a L\'evy measure of \(L.\) Furthermore, observe that  under the assumptions of Corollary~\ref{cor_main}, the second condition in \eqref{ml_ass} holds with \(\beta=1\) independently of the  strong  convergence  order for the corresponding Euler scheme. Finally, let us stress that the assumptions on the coefficient function \(a\) are quite weak and standard in the framework of L\'evy driven SDEs. In fact, they are needed to guarantee existence  and  uniqueness of the solution of \eqref{sde} (see, e.g.,  Ikeda and Watanabe \cite{watanabe1981stochastic}).
\subsection{MLMC algorithm}
\label{mlmc_alg}
Fix some $L>0$ and set $\Delta_{l}=2^{-l}T,$ $l=0,\ldots,L.$ 
Denote 
\begin{eqnarray*}
\boldsymbol{\zeta}_{L}&\doteq&\bigl(\zeta_{L,1},\ldots,\zeta_{L,2^{L}}\bigr)\in\mathbb{R}^{2^{L}}\otimes \mathbb{R}^{m},
\end{eqnarray*}
where the columns of the matrix  $\boldsymbol{\zeta}_{L}$ are i.i.d.  random vectors in \(\DR^m.\) Now we define recursively the independent random matrices  $\boldsymbol{\zeta}_{L-1},\ldots,\boldsymbol{\zeta}_{0}$ with \(\boldsymbol{\zeta}_{l}\in\mathbb{R}^{2^{l}}\otimes \mathbb{R}^{m}\)
via \(\boldsymbol{\zeta}_{l-1}\sim\boldsymbol{\varsigma}(\boldsymbol{\zeta}_{l})=(\varsigma_1(\boldsymbol{\zeta}_{l}),\ldots, \varsigma_{l-1}(\boldsymbol{\zeta}_{l})), \) where each vector \(\varsigma_j(\boldsymbol{\zeta}_{l})\) is coupled with \(\zeta_{l,2j-1}\) and \(\zeta_{l,2j}\) in such a way that all differences 
\begin{eqnarray*}
\varsigma_j(\boldsymbol{\zeta}_{l})-
\zeta_{l,2j-1}-\zeta_{l,2j}, \quad j=1,\ldots, 2^{l-1},
\end{eqnarray*}
are small. For example, one can simply put
\begin{eqnarray}
\label{simple_coupling}
\varsigma_j(\boldsymbol{\zeta}_{l})=
\zeta_{l,2j-1}+\zeta_{l,2j}, \quad j=1,\ldots, 2^{l-1}.
\end{eqnarray}
Next, for any \(l=1,\ldots, L,\) and any random matrix \(\boldsymbol{\zeta}\in \mathbb{R}^{2^{l}}\otimes \mathbb{R}^{m},\)  consider the approximations 
\begin{align*}
&X^l_0(\boldsymbol{\zeta})=X_0,
\\
&X_{j\Delta_{l}}^{l}(\boldsymbol{\zeta})=X_{(j-1)\Delta_{l}}^{l}(\boldsymbol{\zeta})+a\bigl(X_{(j-1)\Delta_{l}}^{l}(\boldsymbol{\zeta})\bigr)\,\zeta_{j}
\end{align*}
with $j=1,\ldots,2^{l},$ and some r. v. \(X_0\in \DR^d.\)   Finally, fix a vector of natural numbers \(\mathbf{N}=(N_0,\ldots,N_L)\) and define a weak MLMC estimate for \(\E[f(X_T)]\) as follows 
\begin{align*}
Y_{L,\mathbf{N}}&\doteq\frac{1}{N_{0}}\sum_{n=1}^{N_{0}}\left[f(X_{T}^{0}(\boldsymbol{\zeta}^{(n)})\right]+
\sum_{l=1}^{L}\frac{1}{N_{l}}\sum_{n=1}^{N_{l}}\left[f\bigl(X_{T}^{l}(\boldsymbol{\zeta}_l^{(n)})\bigr)-f\bigl(X_{T}^{l-1}\bigl(\overline{\boldsymbol{\zeta}}_l^{(n)}\bigr)\bigr)\right],
\end{align*}
where  \(\overline{\boldsymbol{\zeta}}_l^{(n)}=\boldsymbol{\varsigma(\boldsymbol{\zeta}_l^{(n)})}\) and \(\boldsymbol{\zeta}_l^{(n)},\) \(n=1,\ldots, N_l,\) are i.i.d. copies of \(\boldsymbol{\zeta}_l\).

\begin{prop}
\label{prop_euler_compl}
Suppose that the the function \(f\) is Lipschitz continuous and that the distribution of \(\boldsymbol{\zeta}_{L}\) is chosen in such a way that 
\begin{eqnarray}
\label{weak_compl}
\bigl |\E[f(X_{T}^{L}(\boldsymbol{\zeta}_L))]-\E[f(X_{T})]\bigr |\leq c \Delta_L^{\alpha}
\end{eqnarray}
for some \(c>0\) and \(\alpha\geq 1/2.\) Then  under the assumptions of Proposition~\ref{main_prop} and Corollary~\ref{cor_main}, 
and under a proper choice of \(\mathbf{N}\) and \(L,\) the complexity of the estimate \(Y_{L,\mathbf{N}}\)
needed to achieve the accuracy \(\varepsilon\) (as measured by RMSE) is of order \(\varepsilon^{-2}\log^2(\varepsilon).\)
\end{prop}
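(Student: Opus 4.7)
The plan is to verify the two Giles hypotheses \eqref{ml_ass} for the multilevel estimator $Y_{L,\mathbf{N}}$ constructed above, then invoke the standard complexity theorem from \cite{giles2008multilevel}. The first assumption in \eqref{ml_ass}, i.e.\ the bound on the weak bias at the finest level, is just the hypothesis \eqref{weak_compl}, so it holds with the prescribed $\alpha\geq 1/2$. The work is therefore concentrated on producing the variance bound with $\beta=1$ and then carrying out the routine calibration of $L$ and the sample sizes $N_l$.

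To handle the variance, fix a level $l\geq 1$ and apply Corollary~\ref{cor_main} with $n_c=2^{l-1}$, $n_f=2^{l}$ to the pair $(X^{l}_T(\boldsymbol{\zeta}_l),X^{l-1}_T(\overline{\boldsymbol{\zeta}}_l))$: the coupling \eqref{simple_coupling} gives $\mathcal{R}_j\equiv 0$, and the standing moment conditions $m_{f,1},m_{f,2}=O(\Delta_l)$ are exactly those listed as prerequisites of the proposition. Hence
\begin{equation*}
\E\Bigl[\max_{j}\bigl|X^{l}_{j\Delta_l}(\boldsymbol{\zeta}_l)-X^{l-1}_{j\Delta_l}(\overline{\boldsymbol{\zeta}}_l)\bigr|^{2}\Bigr]=O(\Delta_l).
\end{equation*}
Using the Lipschitz assumption on $f$ and the inequality $\Var[Z]\leq \E[Z^{2}]$, this turns directly into
\begin{equation*}
V_l\doteq\Var\!\bigl[f(X^{l}_T(\boldsymbol{\zeta}_l))-f(X^{l-1}_T(\overline{\boldsymbol{\zeta}}_l))\bigr]\leq c_f\,\Delta_l,
\end{equation*}
so that the second condition in \eqref{ml_ass} holds with $\beta=1$. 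Note that the cost $C_l$ of generating one sample at level $l$ is proportional to $2^{l}$.

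It remains to trade off bias, variance and cost. Given a target RMSE $\varepsilon$, choose $L$ so that the squared bias is at most $\varepsilon^{2}/2$: thanks to \eqref{weak_compl} this amounts to $2^{-\alpha L}\lesssim \varepsilon$, i.e.\ $L=\lceil \alpha^{-1}\log_{2}(1/\varepsilon)\rceil+O(1)$. The sample sizes are then chosen by the usual Lagrangian argument
\begin{equation*}
N_l\propto \sqrt{V_l/C_l}\,\sum_{k=0}^{L}\sqrt{V_k C_k}\big/\varepsilon^{2},
\end{equation*}
subject to $\sum_l V_l/N_l\leq \varepsilon^{2}/2$. With $V_l C_l=O(1)$ (this is where $\beta=1$ enters critically), $\sum_{k=0}^{L}\sqrt{V_k C_k}=O(L)$, so the total cost is
\begin{equation*}
\sum_{l=0}^{L}N_l C_l=O\!\left(\varepsilon^{-2}L^{2}\right)=O\!\left(\varepsilon^{-2}\log^{2}(\varepsilon)\right),
\end{equation*}
which is the claimed bound.

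No step poses a genuine difficulty: the only non-trivial input is Corollary~\ref{cor_main}, which has already been established, and the rest is a transcription of Giles' complexity theorem \cite{giles2008multilevel} for the borderline case $\beta=1$. If any subtlety arises, it will be in making sure that $L=O(\log(1/\varepsilon))$ (and not just $O(\log(1/\varepsilon)/\alpha)$) is consistent with all constants, but since $\alpha\geq 1/2$ is a fixed lower bound, this is absorbed into the implicit constants.
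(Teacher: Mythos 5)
Your proposal is correct and follows exactly the route the paper intends: the paper gives no separate written proof of Proposition~\ref{prop_euler_compl}, relying instead on the verification (stated in the discussion after Corollary~\ref{cor_main}) that the Lipschitz property of $f$ together with the coupling bound yields the second condition in \eqref{ml_ass} with $\beta=1$, after which the standard complexity theorem of Giles \cite{giles2008multilevel} quoted in the introduction gives the $\varepsilon^{-2}\log^{2}(\varepsilon)$ bound. Your writeup simply makes that implicit argument explicit, including the correct Lagrangian calibration of $N_l$ and the observation that $\alpha\geq 1/2$ keeps $2^{L}=O(\varepsilon^{-1/\alpha})=O(\varepsilon^{-2})$ under control.
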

\begin{rem}
The distribution of the matrix \(\boldsymbol{\zeta}_{l}\) under coupling \eqref{simple_coupling}, changes with \(l\) in a rather simple way and can be  found explicitly  in many interesting cases (see examples below). In general, one can compute the characteristic function of each vector \(\zeta_{l,j}\) in a closed form, provided the characteristic function of \(\zeta_{L,j}\) is known explicitly. Using the Fourier inversion formula, one can then compute the density of each \(\zeta_{l,j}.\) Let us also note  that there is a  lot of freedom in the choice of the finest approximation \(\boldsymbol{\zeta}_{L}\) satisfying   \eqref{weak_compl}.
\end{rem}
\section{Examples}
\label{examples}
\subsection{Diffusion processes}
\label{diffusion}
Consider now a \(d\)-dimensional diffusion process \((X_t)\) solving the SDE
\begin{eqnarray}
\label{sde:diffusion}
X_t=b(X_t)\,dt+\sigma(X_t)\,dW_t,\quad t\in [0,T],
\end{eqnarray}
where \(W_t=(W^1,\ldots,W^m)\) is a \(m\)-dimensional Brownian motion, \(b:\) \(\DR^d\to \DR^d\) and \(\sigma:\)  \(\DR^d\to \DR^{d}\times \DR^{m} \) are Lipschitz continuous functions.  Although the increments of Wiener process can be simulated exactly, we can consider the following weak Euler scheme
\begin{eqnarray*}
X^{\Delta}_0&=&X_0,
\\
\nonumber
X^{\Delta}_{j\Delta}&=&X^{\Delta}_{(j-1)\Delta}+b\bigl(X^{\Delta}_{(j-1)\Delta}\bigr)\,\Delta+\sum_{k=1}^m\sigma_{k}\bigl(X^{\Delta}_{(j-1)\Delta}\bigr)\,\xi^k_j,
\end{eqnarray*}
where \(j=1,\ldots, n,\) and i.i.d. random variables \((\xi^k_j)\)  satisfy
\begin{eqnarray}
\label{cond_xi_euler}
\bigr|\E[\xi^1_j]\bigl|+\bigl|\E\bigr[(\xi^1_j)^3\bigl]\bigr|+\bigl|\E\bigr[(\xi^1_j)^2\bigl]-\Delta\bigr|\leq c\Delta^2.
\end{eqnarray}
Under some additional assumptions on the coefficient functions \(b\) and \(\sigma,\) and the output function \(f\) spelled out in Talay and Tubaro \cite{talay1990expansion} and Bally and Talay \cite{bally1995euler}, it holds 
\begin{eqnarray*}
\bigl|\E[f(X^\Delta_T)]-\E[f(X_T)]\bigr|\leq c\Delta
\end{eqnarray*}
for some \(c>0.\) The simplest way of constructing a r.v. \(\xi\) with the property \eqref{cond_xi_euler} is to take
\begin{eqnarray}
  \P\left(\xi_{j}^i=\pm \sqrt{\Delta}\right)=\frac{1}{2},\quad i=1,\ldots, m.
  \label{Forms:Binomial:Increments:Base}
\end{eqnarray}
 Observe that distribution of the components of the vector $\bxi_l$ under coupling \eqref{simple_coupling} in the  ML algorithm, is closely related to the Binomial distribution, namely
  \begin{gather}
  \label{incr:dist}
    \dfrac{\xi^i_{l,j}}{2\sqrt{\Delta_L}}+2^{L-l-1}\sim\Bi\left(2^{L-l},\dfrac{1}{2}\right).
  \end{gather}
 Hence the generation of variates $\xi^i_{l,j}$ is straightforward when a generator of binomially
 distributed random variates is available.   For a fixed $L,$ the weak MLMC algorithm implies generation of $\bxi_l$ for $l$ starting from $0$ up to
  $L$. Since all probabilities of distributions $\Bi\left(2^{L-l},\dfrac{1}{2}\right)$ for
  $l\in \{0,1,\ldots,L\}$ are rational numbers, table look-up or alias methods
  (see \cite{devroye1986nonuniform}) can be used to achieve fast single random number generation.
  Since the distributions of $\bxi_l$ do not change between different runs of the MLMC method, all
  the preprocessing required can be done only once and the resulting tables can be stored. In problem-specific
  hardware (FPGA or ASIC) these tables can be kept in permanent shared constant storage, which
  is often cheap, fast and abundant.
  With table lookup methods this would provide $O(L)$ worst-case single random variate generation
  time at the price of storing $O(2^{L+1})$ items of preprocessing data, and with alias methods it is
  possible to attain $O(1)$ worst-case single random variate generation time at the price of storing
  $O(2^{L+1})$ items of preprocessing data.
  Note that the whole procedure of binomial increments generation can be implemented with the use of
  integer numbers only. Therefore, a good MLMC implementation for binomial increments can possibly
  outperform its counterpart for Normal increments.

\subsection{Jump diffusion processes}
\label{jump_sde}
Consider now a \(d\)-dimensional jump diffusion process \((X_t)\) solving the SDE
\begin{eqnarray}
\label{sde:jump}
X_t=X_0+\int_0^t b(X_s)\,ds+\int_0^t \sigma(X_s)\,dW_s+\int_0^t \int_{\mathcal{Z}}\rho(X_{s-},z)\, N(ds,dz),
\end{eqnarray}
where  \(W_t=(W^1_t,\ldots,W^m_t)\)  is a standard \(m\)-dimensional \(\mathcal{F}_t\)-adapted Brownian motion and \(N(ds,dz)\) is a Poisson counting measure on \(\mathbb{R}^+\times \mathcal{Z}\) with a finite intensity measure \(\nu(dz).\)  We assume \(W\) and \(N\) are independent, and that the mappings \(b:\) \(\DR^d\mapsto\DR^d,\)   \(\sigma:\) \(\DR^d\mapsto \DR^{d}\otimes \DR^{m}\) and \(\rho:\)  \(\DR^d\times \mathcal{Z}\mapsto \DR^d\)
are Lipschitz continuous and have at most linear growth on \([0,T]\) so that the solution of \eqref{sde:jump} is well defined.
\par
Let \(\Gamma=\{-1,0,\ldots,m\},\) \(M=\{\gamma=(\gamma_1,\ldots,\gamma_l): \, \gamma_i\in \Gamma,l\geq 0\},\)
and \(\emptyset\) stands for the empty set. For any nonempty \(\gamma=(\gamma_1,\ldots,\gamma_l)\in M,\) denote 
\(\_\gamma=(\gamma_2,\ldots,\gamma_l),\)  \(\gamma\_=(\gamma_1,\ldots,\gamma_{l-1}),\) \(|\gamma|=l,\) 
\(\|\gamma\|=|\gamma|+\) number of zero components of \(\gamma\) and \(\langle\gamma\rangle\)
is the number of negative components of \(\gamma.\)
For any mapping \(\phi\in C^{2}(\DR^d),\) define the operators associated with \eqref{sde:jump}
\begin{eqnarray*}
\mathcal{L}_{-1} [\phi](x;z)&\doteq&\phi(x+\rho(x,z))-\phi(x),
\\
\mathcal{L}_{0} [\phi](x)&\doteq&\sum_{i=1}^d b_i(x)\partial_i \phi(x)+\frac{1}{2} \sum_{i,j=1}^d \varsigma_{ij}(x) \partial_{ij} \phi(x),
\\
\mathcal{L}_{j} [\phi](x)&\doteq&\sum_{i=1}^d \sigma_{ij}(x)\partial_i \phi(x), \quad j=1,\ldots, m,
\end{eqnarray*}
where
\begin{eqnarray*}
\varsigma_{ij}(x)\doteq\sum_{l=1}^m \sigma_{il}(x) \sigma_{jl}(x) \text{ and } \partial_j \doteq \frac{\partial}{\partial x^j}.
\end{eqnarray*}
The composite operator is defined recursively as
\begin{eqnarray*}
\mathcal{L}_{\gamma}[\phi](x;z_1,\ldots,z_{\langle\gamma \rangle})=\mathcal{L}_{\gamma_1}\Bigl(\mathcal{L}_{-\gamma}[\phi](x;z_1,\ldots,z_{\langle\gamma \rangle})\Bigr)
\end{eqnarray*}
if \(\gamma_1\geq 0\) and via
\begin{eqnarray*}
\mathcal{L}_{\gamma}[\phi](x;z_1,\ldots,z_{\langle\gamma \rangle})&=&\mathcal{L}_{\_\gamma}[\phi](x+\rho(x,z_1);z_2,\ldots,z_{\langle\gamma\rangle})
\\
&&-\mathcal{L}_{\_\gamma}[\phi](x;z_2,\ldots,z_{\langle\gamma \rangle})
\end{eqnarray*}
otherwise.
Denote  \(\Delta N_j= N\bigl([(j-1)\Delta,j\Delta],\mathcal{Z}\bigr),\) then the  Euler scheme for \eqref{sde:jump} reads as follows
\begin{eqnarray*}
X^{\Delta}_0&=&X_0,
\\
\nonumber
X^{\Delta}_{j\Delta}&=&X^{\Delta}_{(j-1)\Delta}+a\bigl(X^{\Delta}_{(j-1)\Delta}\bigr)\,\Delta+\sum_{k=1}^m b_{k}\bigl(X^{\Delta}_{(j-1)\Delta}\bigr)\,\Delta W^k_j
\\
\nonumber
&& + \sum_{k=1}^{\Delta N_j}\rho(X^{\Delta}_{(j-1)\Delta},Z_{jk}), \quad j=1,\ldots, n,
\end{eqnarray*}
where \(Z_{jk},\) \(k=1,\ldots, N_j,\) are independent random variables with the law \(\frac{\nu(dz)}{\nu(\mathcal{Z})}.\)
The (essentially) weak Euler scheme can be constructed by replacing the random variables \(\Delta W^k_j\) and \(\Delta N_j\)  by  simple approximations  \(\xi^k_j\) and \(\eta_j,\) respectively which   satisfy
\begin{eqnarray*}
&&\bigr|\E[\xi^k_j]\bigl|+\bigl|\E\bigr[(\xi^k_j)^3\bigl]\bigr|+\bigl|\E\bigr[(\xi^k_j)^2\bigl]-\Delta\bigr|=O(\Delta^2),
\\
&& \bigr|\E[(\eta_j)^l]-\nu(\mathcal{Z})\Delta\bigl|= O(\Delta^2), \quad l=1,2,3.
\end{eqnarray*}
for some \(c>0.\) In particular, one can take 
\begin{eqnarray}
  &&\nonumber\P\left(\xi_{j}^i= \sqrt{\Delta}\right)=\frac{1}{2}, \quad \P\left(\xi_{j}^i= -\sqrt{\Delta}\right)=\frac{1}{2},
  \\
  &&\P\left(\eta_{j}=1\right)=p, \quad  \P\left(\eta_{j}=0\right)=1-p,
   \label{jump:prob} 
\end{eqnarray}
where \(|p-\Delta\,\nu(\mathcal{Z})|=O(\Delta^2).\)
Moreover, the random variables \((Z_{j,k})\) can be replaced by i.i.d. random variables \((\zeta_{j,k})\) satisfying 
\begin{eqnarray}
\label{zeta_moments1}
&&\E\left[L_{\gamma}[\Phi](x,\zeta_{1,l_{1}}\ldots,\zeta_{\langle\gamma\rangle,l_{\langle\gamma\rangle}})\right]=\E\left[L_{\gamma}[\Phi](x,Z_{1,l_{1}}\ldots,Z_{\langle\gamma\rangle,l_{\langle\gamma\rangle}})\right] 
\end{eqnarray}
and
\begin{multline}
\label{zeta_moments2}
\E\left[L_{\gamma}[\Phi](x,\zeta_{1,l_{1}}\ldots,\zeta_{\langle\gamma\rangle,l_{\langle\gamma\rangle}})L^{\top}_{\gamma}[\Phi](x,\zeta_{1,l_{1}}\ldots,\zeta_{\langle\gamma\rangle,l_{\langle\gamma\rangle}})\right]=
\\
=\E\left[L_{\gamma}[\Phi](x,Z_{1,l_{1}}\ldots,Z_{\langle\gamma\rangle,l_{\langle\gamma\rangle}})L^{\top}_{\gamma}[\Phi](x,Z_{1,l_{1}}\ldots,Z_{\langle\gamma\rangle,l_{\langle\gamma\rangle}})\right]
\end{multline}
for \(\Phi(x)\equiv x,\) all \(x\in \DR^d,\) \(|\gamma|\leq 2\) with \(\langle\gamma\rangle>0 \) and \(k=1,2,\) where \(Z\) is distributed according to \(\nu(dz)/\nu(\mathcal{Z}).\) 
\begin{rem}
If \(d=m=1,\)  then coefficients functions of order \(|\gamma|=2\) with \(\langle\gamma\rangle>0\) take the form
\begin{eqnarray*}
L_{(0,-1)}[\Phi](x,\zeta_{1,l_{1}})&=&b(x)\,\partial_x\rho(x,\zeta_{1,l_{1}}),
\\
L_{(1,-1)}[\Phi](x,\zeta_{1,l_{1}})&=&\sigma(x)\,\partial_x\rho(x,\zeta_{1,l_{1}}),
\\
L_{(-1,0)}[\Phi](x,\zeta_{1,l_{1}})&=&b(x+\rho(x,\zeta_{1,l_{1}}))-a(x),
 \\
L_{(-1,1)}[\Phi](x,\zeta_{1,l_{1}})&=&\sigma(x+\rho(x,\zeta_{1,l_{1}}))-b(x),
 \\
L_{(-1,-1)}[\Phi](x,\zeta_{1,l_{1}},\zeta_{1,l_{2}})&=&\rho(x+\rho(x,\zeta_{1,l_{1}}),\zeta_{1,l_{2}})-\rho(x,\zeta_{1,l_{2}}).
\end{eqnarray*}
If moreover  \(\rho(x,u)=x u\),   \(a(x)=a_0+a_1x^{\alpha},\) \(b(x)=b_0+b_1x^{\beta},\)  we get
\begin{eqnarray*}
L_{(0,-1)}[\Phi](x,\zeta_{1,l_{1}})&=&(a_0+a_1x^{\alpha})\zeta_{1,l_{1}},
\\
L_{(1,-1)}[\Phi](x,\zeta_{1,l_{1}})&=& (b_0+b_1x^{\beta}) \zeta_{1,l_{1}},
\\
	L_{(-1,0)}[\Phi](x,\zeta_{1,l_{1}})&=&a_1 x^{\alpha} \left[(1+\zeta_{1,l_{1}})^{\alpha}-1\right],
 \\
L_{(-1,1)}[\Phi](x,\zeta_{1,l_{1}})&=&b_1 x^{\beta} \left[(1+\zeta_{1,l_{1}})^{\beta}-1\right],
 \\
L_{(-1,-1)}[\Phi](x,\zeta_{1,l_{1}},\zeta_{1,l_{2}})&=&x\zeta_{1,l_{1}}\zeta_{1,l_{2}}.
\end{eqnarray*}
Similar results hold for multidimensional case as well.
Hence for a large class of stochastic processes, including affine and polynomial processes, the conditions \eqref{zeta_moments1}
and \eqref{zeta_moments2} can be viewed as generalised moment conditions.
\end{rem}
In the corresponding ML algorithm we can use the approximation \(\eta_{L,j}\) for \(\Delta N_{L,j}\) of the form: 
\begin{eqnarray*}
\P\left(\eta_{L,j}=1\right)=\Delta_L\,\nu(\mathcal{Z}), \quad  \P\left(\eta_{L,j}=0\right)=1-\Delta_L\,\nu(\mathcal{Z}).
\end{eqnarray*}
Then  the random variables \(\eta_{l,j}\) for \(l<L\) have a binomial distribution which can be easily simulated as described in Section~\ref{diffusion}.
\subsection{General L\'evy processes}
Consider a one-dimensional square integrable L\'evy process \((L_t)_{t\geq 0}\) of the form
\begin{eqnarray*}
L_t=b t+\sigma B_t+\int_0^t \int_{\DR} z \tilde N (ds,dz),
\end{eqnarray*}
for some \(\sigma\geq 0,\) where \(\tilde N (ds,dz)\) is a compensated Poisson random measure on \(\DR_+\otimes \DR\) with intensity measure \(ds\,\nu(dz),\) where \(\int |z|^2\,\nu(dz)<\infty.\) In order to apply the Euler approximation scheme to \eqref{sde}, we need to approximate the increments \(\Delta L_j.\) Asmussen and Rosinski \cite{asmussen2001approximations} (see also \cite{jacod2005approximate}) suggested to replace the small jumps in \(L\) by an appropriate Gaussian random variable. So we define 
\[
\zeta_{\Delta,j}^{\delta}\doteq \Delta L^{\delta}_{j}+U^{\delta}_{\Delta,j},
\]
 where \(L^{\delta}\) is the same L\'evy process as \(L\) without its (compensated) jumps smaller than \(\delta\) and \(U^{\delta}_{\Delta,j}\) is Gaussian random variable with the same mean and variance as the neglected jumps. 
 The resulting Euler scheme takes the form
\begin{eqnarray}
\label{euler_scheme_gauss_approx}
X^{\Delta,\delta}_0&=&X_0,
\\
\nonumber
X^{\Delta, \delta}_{j\Delta}&=&X^{\Delta,\delta}_{(j-1)\Delta}+a\bigl(X^{\Delta,\delta}_{(j-1)\Delta}\bigr)\,\zeta_{\Delta,j}^{\delta}, \quad j=1,\ldots, n.
\end{eqnarray}
Let us discuss the first condition in \eqref{ml_ass} (weak convergence). As was shown in \cite{dia2013error} (see also\cite{jacod2005approximate}),
\begin{eqnarray}
\label{bias_gauss_approx}
\Bigl |\E[f(X^{\Delta, \delta}_{T})]-\E[f(X_{T})]\Bigr|\lesssim \Delta\vee\delta^{3-\alpha},
\end{eqnarray}
provided \(f\in C^2(\DR)\) and 
\begin{eqnarray}
\label{nu_0}
\nu\bigl(\{|z|>t\}\bigr)\lesssim t^{-\alpha},\quad t\to +0.
\end{eqnarray}
Note that each r. v. \(\zeta_{\Delta,j}^{\delta}\) can be represented as 
\begin{eqnarray*}
\zeta_{\Delta,j}^{\delta}=\Delta b+\sigma_{\Delta,\delta}\cdot \xi_j+\sum_{i=1}^{N^{\delta}_{\Delta,j}} \bigl(Z_{i,j}^\delta-\E[Z_{i,j}^\delta]\bigr), \quad j=1,\ldots, n,
\end{eqnarray*}
where  \(\sigma^2_{\Delta,\delta}\doteq\Delta \bigl(\sigma^2+\int_{|z|\leq \delta} z^2\, \nu(dz)\bigr),\) \(\xi_j\sim \mathcal{N}(0,1),\) \(N_{\Delta,j}^\delta\sim \operatorname{Poiss}\Bigl(\Delta \, \nu\bigl(\{|z|>\delta\}\bigr)\Bigr)\) and \(Z_{1,j}^\delta,Z_{2,j}^\delta,\ldots \) are i.i.d. random variables with the distribution 
\[
1_{|z|>\delta}\,\nu(dz)/\nu\bigl(\{|z|>\delta\}\bigr).
\]
Hence the cost of generating one trajectory by means of \eqref{euler_scheme_gauss_approx} is of order \(\Delta^{-1}+\nu\bigl(\{|z|>\delta\}\bigr).\) Let us now fix two natural numbers \(n_f,\) \(n_c=2\cdot n_f,\) two positive real numbers \(\delta_f,\) \(\delta_c\) and describe the coupling between 
\(\zeta_{\Delta_c}^{\delta_c}\) and \(\zeta_{\Delta_f}^{\delta_f}.\) Set
\begin{multline}
\zeta_{\Delta_c,j}^{\delta_c}=2\Delta_f b+\sigma_{\Delta_f,\delta_f}\cdot (\xi_{2j}+\xi_{2j-1})
\\
+\sum_{i=1}^{N^{\delta_f}_{\Delta_f,2j}} \Bigl(Z_{i,2j}^{\delta_f} \, 1_{|Z_{i,2j}|>\delta_c}-\E\Bigl[Z_{i,2j}^{\delta_f} \, 1_{|Z_{i,2j}|>\delta_c}\Bigr]\Bigr)
\\
\label{coupling_1}
+\sum_{i=1}^{N^{\delta_f}_{\Delta_f,2j-1}} \Bigl(Z_{i,2j-1}^{\delta_f} \, 1_{|Z_{i,2j-1}|>\delta_c}-\E\Bigl[Z_{i,2j-1}^{\delta_f} \, 1_{|Z_{i,2j-1}|>\delta_c}\Bigr]\Bigr),
\end{multline}
then 
\begin{eqnarray*}
\mathcal{R}&\doteq&\zeta_{\Delta_c,j}^{\delta_c}-\zeta_{\Delta_f,2j}^{\delta_f}-\zeta_{\Delta_f,2j-1}^{\delta_f}
\\
&=& \sum_{i=1}^{N^{\delta_f}_{\Delta_f,2j}} \Bigl(Z_{i,2j}^{\delta_f} \, 1_{|Z_{i,2j}|\leq\delta_c}-\E\Bigl[Z_{i,2j}^{\delta_f} \, 1_{|Z_{i,2j}|\leq\delta_c}\Bigr]\Bigr)
\\
&& +\sum_{i=1}^{N^{\delta_f}_{\Delta_f,2j-1}} \Bigl(Z_{i,2j-1}^{\delta_f} \, 1_{|Z_{i,2j-1}|\leq\delta_c}-\E\Bigl[Z_{i,2j-1}^{\delta_f} \, 1_{|Z_{i,2j-1}|\leq\delta_c}\Bigr]\Bigr).
\end{eqnarray*}
As a result, \(\E[\mathcal{R}]=0\) and
\begin{eqnarray*}
\E\bigl[|\mathcal{R}|^2\bigr]\leq 2\Delta_f \int_{|z|\leq \delta_c}|z|^2\,\nu(dz).
\end{eqnarray*}
Hence the assumptions of Corollary~\ref{cor_main} are fulfilled, provided 
\begin{eqnarray*}
\int_{|z|\leq \delta_c}|z|^2\,\nu(dz)\leq c\Delta_f
\end{eqnarray*}
for some \(c>0.\) Under \eqref{nu_0}, this is equivalent to the relation \(\delta_c\lesssim \Delta^{1/(2-\alpha)}_f.\)
Using the estimate \eqref{bias_gauss_approx}, we derive the complexity of the resulting coupled multilevel scheme.
\begin{prop}
\label{compl_gen_levy}
If \(\alpha\leq 3-\sqrt{3}\) in \eqref{nu_0}, then the complexity of the coupled multilevel algorithm presented in Section~\ref{mlmc_alg} with the coupling  \eqref{coupling_1} is of order 
\[
\begin{cases}
 \varepsilon^{-2}\cdot\left(\log{\varepsilon}\right)^2, & \alpha\le1,\\
  \varepsilon^{-\frac{2}{2-\alpha}}, & 1<\alpha\le 3-\sqrt{3},\\
 \end{cases}
 \]
provided \(\delta_l=\Delta^{1/(2-\alpha)}_l.\) For \(\alpha>3-\sqrt{3}\) we can use the simplest coupling \(\zeta_{\Delta_c,j}^{\delta_c}=\zeta_{\Delta_f,2j}^{\delta_f}+\zeta_{\Delta_f,2j-1}^{\delta_f}\) and constant \(\delta_l=\varepsilon^{1/(3-\alpha)}\) to get  upper estimate \(\varepsilon^{-(6 - \alpha)/(3-\alpha)} \) for the complexity of the corresponding coupled multilevel algorithm.
\end{prop}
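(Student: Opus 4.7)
The plan is to apply the MLMC complexity theorem in the generalised form that allows the per-trajectory cost at level $l$ to scale as $C_l\lesssim \Delta_l^{-\gamma}$: given a weak rate $\alpha_w$ and a variance decay $\Var[f(X_T^l)-f(X_T^{l-1})]\lesssim \Delta_l^{\beta}$, the optimal sample allocation $N_l\propto\sqrt{V_l/C_l}$ produces a total cost of order $\varepsilon^{-2}\bigl(\sum_{l=0}^{L}\sqrt{V_lC_l}\bigr)^{2}$, where $L$ is the smallest level with $\Delta_L^{\alpha_w}\lesssim\varepsilon$. I would identify $\alpha_w$, $\beta$ and $C_l$ separately for each of the two couplings and then evaluate the resulting geometric sum.

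For the coupling \eqref{coupling_1} with $\delta_l=\Delta_l^{1/(2-\alpha)}$, the weak bound \eqref{bias_gauss_approx} gives $\Delta_L\vee\delta_L^{3-\alpha}=\Delta_L$ because $(3-\alpha)/(2-\alpha)\geq 1$, so $\alpha_w=1$. The estimate $\E[|\mathcal{R}|^2]\leq 2\Delta_f\int_{|z|\leq\delta_c}|z|^2\nu(dz)\lesssim \Delta_f\,\delta_c^{2-\alpha}=\Delta_f^{2}$ derived in the text, combined with Corollary \ref{cor_main}, yields $\beta=1$. The per-trajectory cost consists of $\Delta_l^{-1}$ Euler steps plus an expected $T\,\nu(\{|z|>\delta_l\})\lesssim \delta_l^{-\alpha}=\Delta_l^{-\alpha/(2-\alpha)}$ large jumps, so $\gamma=\max(1,\alpha/(2-\alpha))$. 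For $\alpha\leq 1$ the regime is balanced, $\beta=\gamma=1$, so $V_lC_l=O(1)$ and the sum is $O(\log(1/\varepsilon))$; squaring and multiplying by $\varepsilon^{-2}$ yields the $\varepsilon^{-2}\log^{2}\varepsilon$ claim. For $1<\alpha\leq 3-\sqrt{3}$ the summands $\sqrt{V_lC_l}\sim 2^{l(\alpha-1)/(2-\alpha)}$ grow geometrically, so the sum is dominated by $l=L$ with $2^L\sim\varepsilon^{-1}$, and a short algebraic simplification of the exponent $-2-2(\alpha-1)/(2-\alpha)=-2/(2-\alpha)$ produces the stated $\varepsilon^{-2/(2-\alpha)}$.

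For $\alpha>3-\sqrt{3}$ the same strategy switches to the naive coupling, where $\mathcal{R}\equiv 0$ and $\delta=\varepsilon^{1/(3-\alpha)}$ is held constant across levels. Here $\beta=1$ is preserved and the weak bound forces $\Delta_L\sim\varepsilon$. The per-trajectory cost $C_l=\Delta_l^{-1}+\delta^{-\alpha}$ is no longer a pure power, so I would plug directly into the allocation formula: $V_lC_l=1+\Delta_l\,\delta^{-\alpha}$, giving $\sqrt{V_lC_l}\sim 2^{-l/2}\delta^{-\alpha/2}$ on the coarse levels $\Delta_l\geq \delta^{\alpha}$ (a geometric series summing to $O(\delta^{-\alpha/2})$) and $\sqrt{V_lC_l}\sim 1$ on the finer levels (contributing only $O(\log(1/\varepsilon))$). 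The coarse part dominates, producing total cost $\varepsilon^{-2}\delta^{-\alpha}=\varepsilon^{-(6-\alpha)/(3-\alpha)}$. The main bookkeeping difficulty is verifying the threshold $3-\sqrt{3}$: equating the two complexity exponents $2/(2-\alpha)$ and $(6-\alpha)/(3-\alpha)$ reduces to the quadratic $\alpha^{2}-6\alpha+6=0$, whose relevant root is exactly $\alpha=3-\sqrt{3}$, beyond which the naive coupling becomes preferable.
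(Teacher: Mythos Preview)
Your proposal is correct and takes essentially the same approach as the paper: the paper derives the optimal sample sizes via a Lagrange multiplier argument and arrives at the identical cost formula $\varepsilon^{-2}\bigl(\sum_{l}\sqrt{\Delta_l(\delta_l^{-\alpha}+\Delta_l^{-1})}\bigr)^2=\varepsilon^{-2}\bigl(\sum_l\sqrt{V_lC_l}\bigr)^2$, then treats the two choices $\delta_l=\Delta_l^{1/(2-\alpha)}$ and $\delta_l\equiv\varepsilon^{1/(3-\alpha)}$ exactly as you do. Your write-up is in fact more complete than the paper's, which simply ends with ``Combining all the cases together we get the statement'' without spelling out the $\alpha\le 1$ versus $1<\alpha$ dichotomy or the quadratic $\alpha^2-6\alpha+6=0$ that pins down the crossover at $3-\sqrt{3}$.
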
 
\paragraph{Discussion}
Observe that the complexity of the standard MC algorithm  for estimating \(\E[f(X_T)]\) is bounded above via
\[
\begin{cases}
 \varepsilon^{-3}, & \alpha\le 3/2,\\
  \varepsilon^{-(6 - \alpha)/(3-\alpha)}, & 3/2<\alpha\le 2.
 \end{cases}
 \]
So the coupled MLMC approach is superior to the standard MC algorithm as long as \(\alpha\leq 3-\sqrt{3}.\)  A similar behaviour can be observed  
in Dereich \cite{dereich2011multilevel} (at least  for \(\alpha\leq 1\)).
We can further replace the restricted L\'evy jump sizes \((Z_{i,j}^{\delta_f})\) by some simple random variables using the approach presented in Section~\ref{jump_sde}. Note that in the latter case the above complexity bounds continue to hold. 
\section{Numerical experiments}
In this section we present numerical examples corresponding to process classes discussed in Section~\ref{examples}.

The MLMC algorithm is implemented according to the \cite{giles2008multilevel}, with some changes, due to the specific structure of the simulated process. Recall, that the MLMC estimator has the form:
\begin{align*}
Y_{L,\mathbf{N}}&=\frac{1}{N_{0}}\sum_{n=1}^{N_{0}}\left[f(X_{T}^{0}(\boldsymbol{\zeta}^{(n)})\right]+
\sum_{l=1}^{L}\frac{1}{N_{l}}\sum_{n=1}^{N_{l}}\left[f\bigl(X_{T}^{l}(\boldsymbol{\zeta}_l^{(n)})\bigr)-f\bigl(X_{T}^{l-1}\bigl(\overline{\boldsymbol{\zeta}}_l^{(n)}\bigr)\bigr)\right],\\
&=\hat{Y}_0 + \sum_{l=0}^{L} \hat{Y}_l
\end{align*}
But the general scheme is the same for all considered problems and can be summarized in the following algorithm:
\begin{enumerate}
 \item[Input:] Requested accuracy $\epsilon$ and set the final level $\hat{L}$.
 \item Set $L:=2$
 \item Compute $N_l:=100$ samples on levels $l=0,1,2$
\item Estimate $\Var(\hat{Y}_l)$ and update $N_l$ for each level $l=0,\dots,L$:
$$N_l := \max\left\{N_l, \left\lceil 2\cdot\epsilon^{-2} \cdot \sqrt{\Var(\hat{Y}_l)  2^{-l}} \cdot \sum_{k=0}^L \sqrt{\Var(\hat{Y}_k)\cdot 2^{k}}\right\rceil\right\}$$
If the update $N_l$ is increased less than 1\% on the levels, then go to step 5.  
\item Compute the additional number of samples and Go to step 3.
\item If $L<2$ or $\max\{|\hat{Y}_{L-1}|/2, |\hat{Y}_L|\} \geq \epsilon/\sqrt{2}$:\\
$\hphantom{mm}L:=L+1$ , $\Var(\hat{Y}_L) =\Var(\hat{Y}_{L-1})/ 2$\\
Else: Return $\sum_{l=0}^L \hat{Y}_l$.
\item If $L>\hat{L}$, then \\
Display error: The final level $\hat{L}$ is insufficient for the convergence.\\
Return $\sum_{l=0}^L\hat{Y}_l$.
\item Goto step 3.
\end{enumerate}
In all of our numerical experiments we have chosen $\hat{L}$ to be sufficiently large, so that $\hat{L}\ge L$ was always satisfied.
\subsection{Diffusion process}
\label{sec:num:2}
\subsubsection{European max-call option}
Consider a three dimensional process $X_t=(X_t^1,X^2_t,X_t^3),$ \(t\in [0,T],\) with independent
components where each process \(X^i_t\) solves one-dimensional SDE of the form \eqref{sde:diffusion} with  
$b(x) = r\cdot x$ and $\sigma(x) = \sigma\cdot x$ for some $r,\,\sigma\in\mathbb{R}$. We are
interested in computing the expectation of
\begin{eqnarray*}
  f(X_T) = e^{-r\cdot T}\max\left(\max(X^1_{T},\,X^2_{T},\,X^3_{T})-K, 0\right).
\end{eqnarray*}
We chose the following parameters:
\begin{eqnarray*}
  r = 0.05, \quad \sigma = 0.2, \quad T = 1, \quad K = 1, \quad X_0^i = 1,\quad i=1,2,3,
\end{eqnarray*}
and $\hat{L}=9$.
In fact in this case the exact solution is available and for above parameter values, we have $\E [f(X_T)]\approx 0.2276799594$.
The variance decay is presented on Figure~\ref{Figs:Norm:Bin:Comp2}. In particular, the line  $\alpha_1-\alpha_2\cdot l$ with $\alpha_2 = 0.9753$
fits the estimated log-variances best and this is in agreement with Corollary \ref{cor_main}.
The corresponding RMSE is presented in Figure~\ref{Figs:RMSE2}.

\begin{figure}[ht]
  \centerline{\hfill\includegraphics[width=0.75\textwidth]{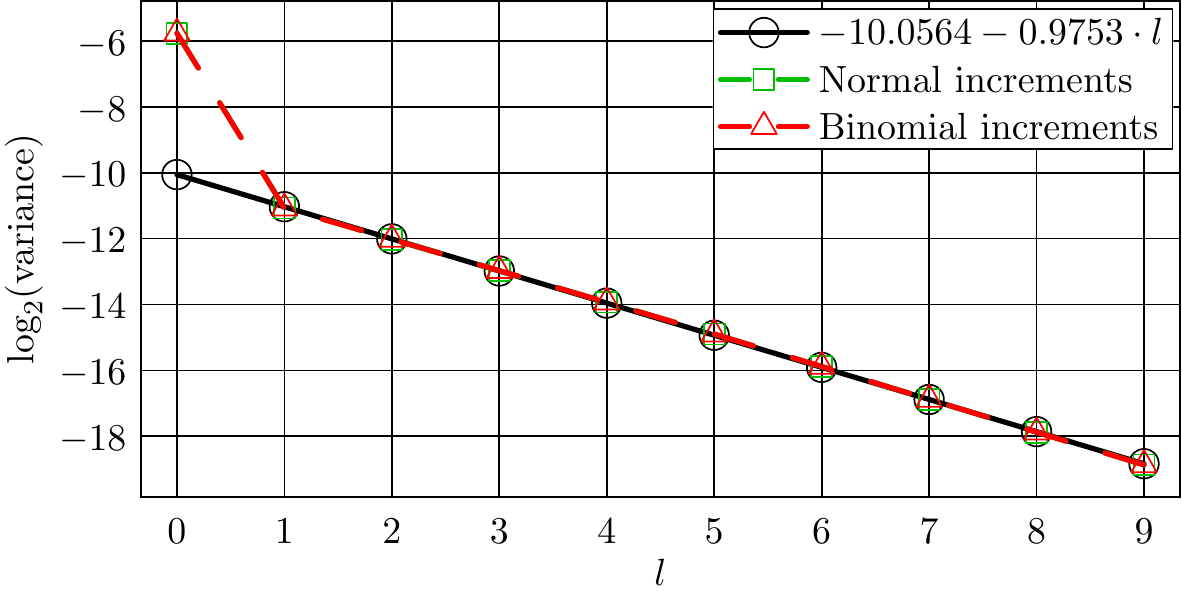}\hfill}
  \caption{Three dimensional European max-call option: level variances for schemes with binomial and normal increments.}
  \label{Figs:Norm:Bin:Comp2}
\end{figure}

\begin{figure}[ht]
  \centerline{\hfill\includegraphics[width=0.75\textwidth]{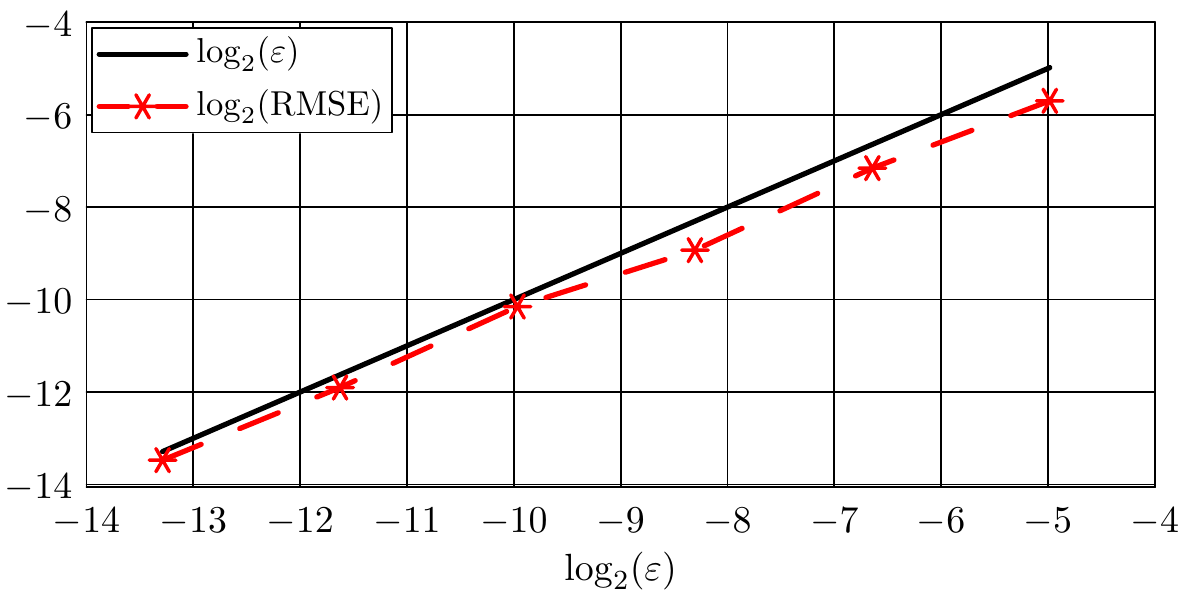}\hfill}
  \caption{Three dimensional European max-call option: estimated RMSE against the required precision \(\varepsilon\) for different values of \(\varepsilon\).}
  \label{Figs:RMSE2}
\end{figure}
\subsubsection{Geometric Asian option} Consider a one dimensional process $X_t,$ \(t\in [0,T],\) where each coordinate process \(X^i_t\) solves one-dimensional SDE of the form \eqref{sde:diffusion} with 
$b(x) = r\cdot x$ and $\sigma(x) = \sigma\cdot x$ for some $r,\,\sigma\in\mathbb{R}$. We are
interested in computing the expectation of the functional
\begin{eqnarray*}
  f(X_\cdot) = e^{-r\cdot T}\max\left(\exp\left(\frac1T\int\limits_{0}^T\log(X_t)dt\right)-K, 0\right).
\end{eqnarray*}
The parameter values are
\begin{eqnarray*}
r = 0.05, \ \sigma = 0.2,\ \  T = 1, \  K = 1.
\end{eqnarray*}
In this case the exact value of the expectation is given by 
\[\E[f(X_\cdot)]\approx 0.05546818634.\]
The variance decay is presented on Figure~\ref{Figs:Norm:Bin:Asian}. Due the fact, that at first levels the variance decays faster than predicted, we have fitted the variance decay only on the last 6 levels with the line  $\alpha_1-\alpha_2\cdot l$ and got $\alpha_2 = 1.0059$.
The corresponding RMSE is presented in Figure~\ref{Figs:RMSE:Asian}.

\begin{figure}[ht]
  \centerline{\hfill\includegraphics[width=0.75\textwidth]{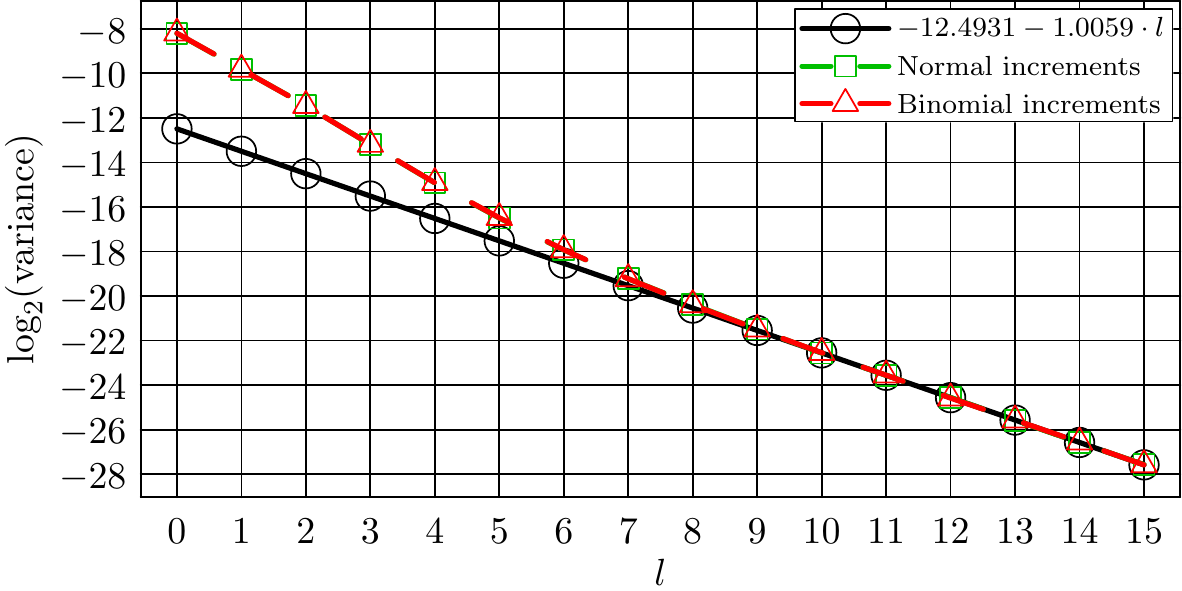}\hfill}
  \caption{Geometric Asian option: level log-variances for binomial and normal increments.}
  \label{Figs:Norm:Bin:Asian}
\end{figure}

\begin{figure}[ht]
  \centerline{\hfill\includegraphics[width=0.75\textwidth]{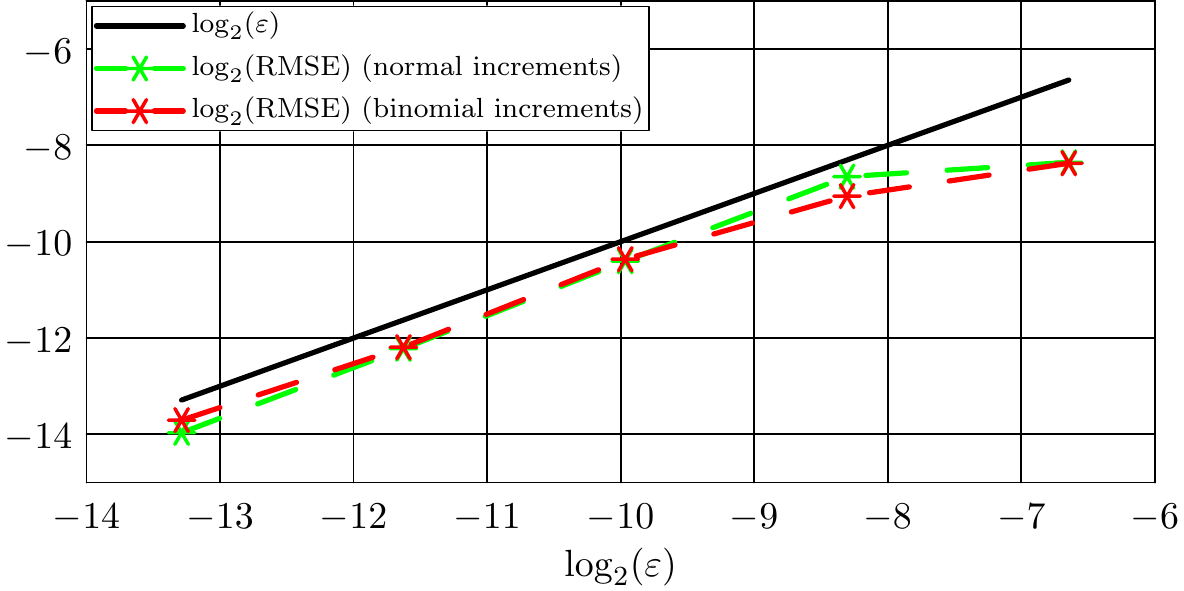}\hfill}
  \caption{Geometric Asian option: RMSE for binomial and normal increments.}
  \label{Figs:RMSE:Asian}
\end{figure}

\subsection{Jump diffusions}
\label{sec:num:3}
Consider a jump SDE
\begin{equation*}
dX_{t}=\left(r-\lambda\cdot \left(e^{m+0.5\cdot\theta^2}-1\right)\right)\cdot X_{t}\cdot dt+\sigma\cdot X_{t}\cdot W_{t} + X_{t}\cdot dJ(t),
\end{equation*}
where 
$$J(t)=\sum\limits_{j=1}^{N(t)}(Y_j-1),\quad \log(Y_j)\sim \mathcal{N}(m, \theta^2)$$
and $N(t)$ is a Poisson process with rate $\lambda$.
We are
interested in computing the expectation of
\begin{eqnarray*}
  f(X_T) = e^{-r\cdot T}\max\left(X(T)-K, 0\right).
\end{eqnarray*}
The parameters' values are
\begin{eqnarray*}
r = 0.05, \ \sigma = 0.2,\  \lambda = 0.5,\ m = 0.05,\  \theta = 0.25,\  T = 1, \  K = 1.
\end{eqnarray*}
It follows from \cite{glasserman2004monte} (Section 3.5) that, for above  parameter values $\E[f(X_T)] \approx0.153065585$. We have performed two types of simulations with the fixed top level $\hat{L}=8$:
\begin{itemize}
\item $Y$ was sampled from the lognormal distribution, while the increments of the Brownian motion were modelled as normal random variables
\item $Y$ was sampled as a discrete random variable $\hat{Y}$ according to the Remark \ref{zeta_moments2}, with moments matching first $6$ moments of the lognormal distribution, while the increments of the Brownian motion were modelled as discrete random variable defined by  \eqref{incr:dist}.
\end{itemize}
In both of those cases, the number of jumps $\eta_{l,j}$ at the level $l$ and step $j$
is generated  via
  \begin{gather}
  \label{incr:jump}
    \eta_{l,j}\sim\Bi\left(2^{\hat{L}-l},2^{-\hat{L}}\cdot \lambda\right).
  \end{gather}
One can see, that \eqref{incr:jump} can be implemented  in the same spirit as \eqref{incr:dist}.
On the finest level $L,$ we allow only for two jumps $0$ or $1.$  
Let us denote by $\mu_i$ the $i$th moment of the lognormal distribution with parameters $m$ and $\theta$. The random variable $\hat{Y}$ takes $4$ values with probabilities $p_1,\ldots,p_4$. The values and probabilities are obtained by solving the optimization problem:
\begin{align*}
\text{Minimize }&\\
& \left(\sum_{k=1}^4 p_k\cdot x_k^7 - \mu_7\right)^2\\
\text{Subject to }&\\
&\sum_{k=1}^4 p_k\cdot x_k^i = \mu_i,\qquad i=1,\ldots,6
\end{align*}
The solution is
\begin{align*}
p_1 = 0.608176614910593,  &\quad  x_1 = 1.081500568717563 \\ 
p_2 = 0.003503326771883,  &\quad  x_2 = 2.376117006693613 \\
p_3 = 0.226782660300013,  &\quad  x_3 = 0.719559222085786 \\
p_4 = 0.161537398017512,  &\quad  x_4 = 1.581001071314797
\end{align*}
The variance decay is shown in Figure~\ref{Figs:Merton:Max:Variance} for both types of simulations. We estimated RMSE of the
ML estimate based on the weak Euler scheme based on \(50\) independent runs, see
Figure~\ref{Figs:Merton:Max:RMSE}.

\begin{figure}
  \centerline{\hfill\includegraphics[width=0.75\textwidth]{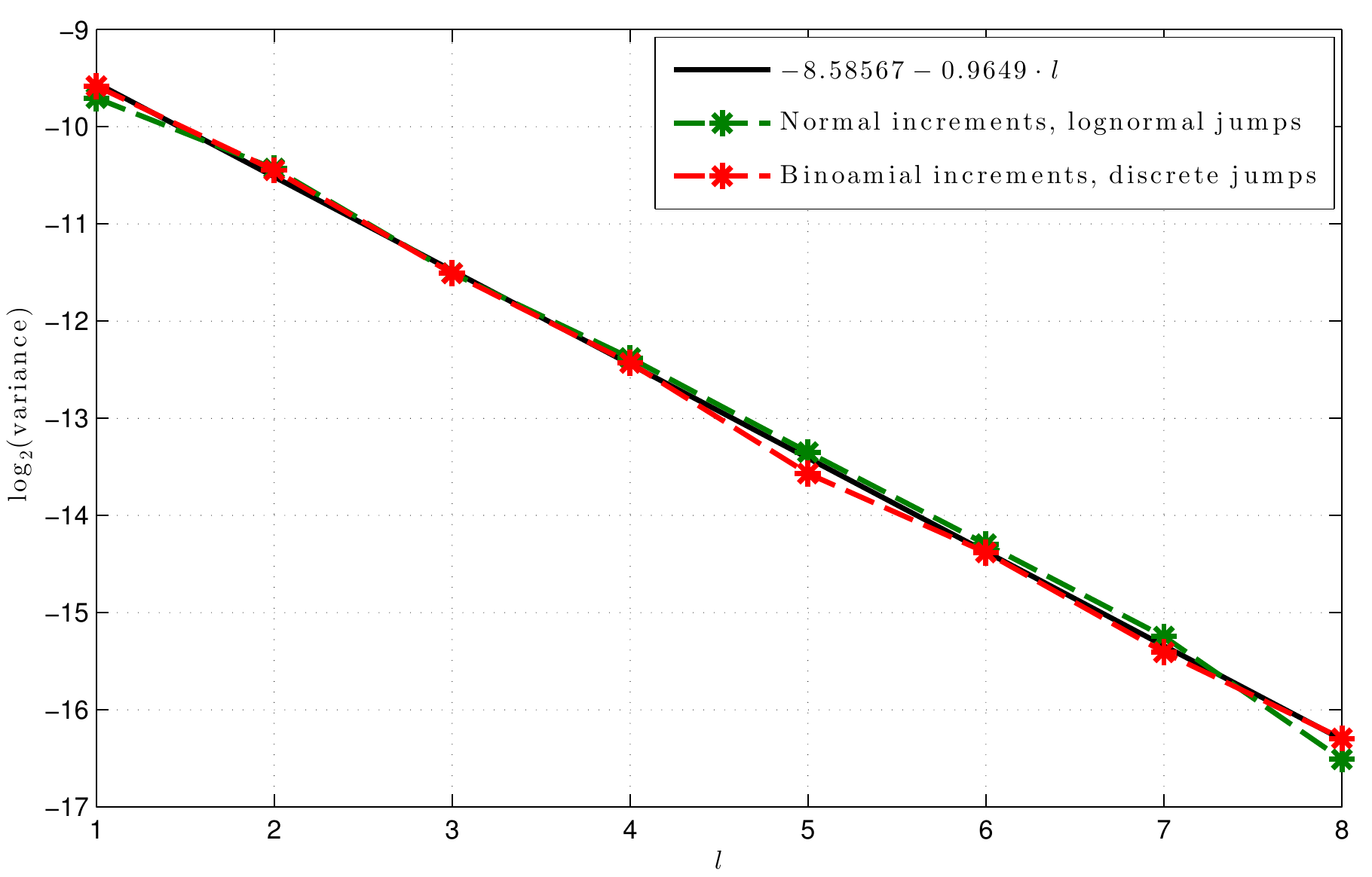}\hfill}
  \caption{European option: level log-variances for binomial increments and discrete jumps, normal increments and lognormal jumps.} 
  \label{Figs:Merton:Max:Variance}
\end{figure}
\begin{figure}[ht]
  \centerline{\hfill\includegraphics[width=0.75\textwidth]{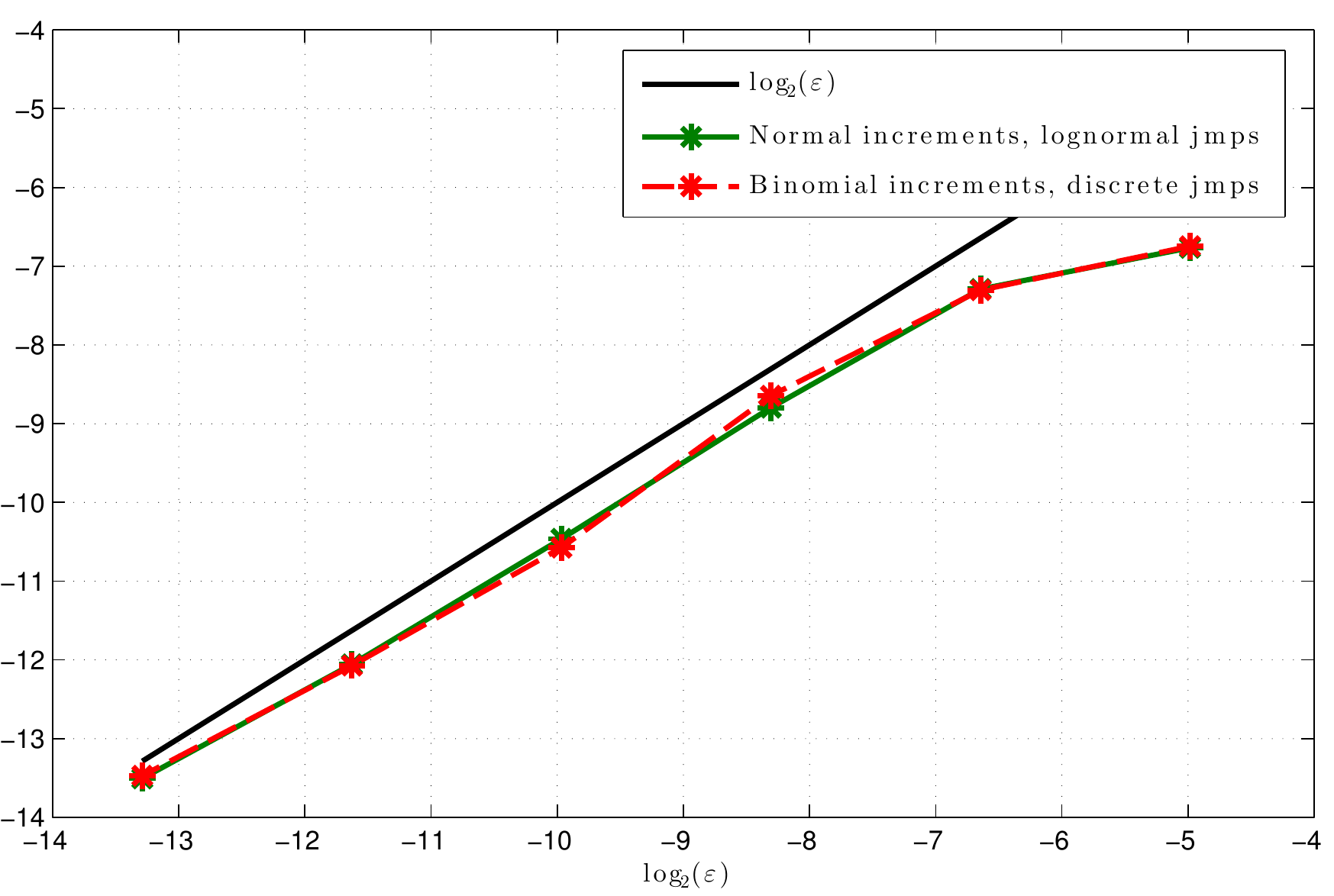}\hfill}
  \caption{European option: RMSE for binomial increments and discrete jumps, normal increments and lognormal jumps.} 
  \label{Figs:Merton:Max:RMSE}
\end{figure}


\paragraph{Acknowledgments}
Authors are thankful to Prof.~Mike Giles, Dr.~Lukasz Szpruch and Dr.~Sonja Cox for their helpful comments and remarks. Authors are very grateful to Vladimir Shiryaev for his assistance with the numerical experiments.
\section{Proofs}
\begin{lem}
\label{bound}
Suppose that the coefficient function \(a\) in \eqref{sde} is uniformly Lipschitz and has at most linear growth, i.e.,
\begin{eqnarray}
\label{lip}
\|a(x)-a(x')\|\leq L_a\,\|x-x'\|,\quad 
\|a(x)\|^2\leq B^2_a \, (1+\|x\|^2)
\end{eqnarray}
for any \(x,x'\in \DR^d\)  and some positive constants \(L_a\) and \(B_a.\) 
Moreover, assume that \(\E[\|X_0\|^2]<\infty,\) then the following estimates hold
$$\E\left[\bigl\Vert X_{n\Delta_f}^f \bigr\Vert^2\right]\leq 3 B_{a}^{2}\cdot (n\cdot m_{f,2}+n^2\cdot m^2_{f,1})\cdot\exp\left(3B_{a}^{2}\cdot (n\cdot m_{f,2}+n^2\cdot m^2_{f,1})\right),$$
$$\E\left[\bigl\Vert X_{n\Delta_c}^c \bigr\Vert^2\right]\leq 3 B_{a}^{2}\cdot (n\cdot m_{f,2}+n^2\cdot m^2_{f,1})\cdot\exp\left(3B_{a}^{2}\cdot (n\cdot m_{f,2}+n^2\cdot m^2_{f,1})\right),$$
for \(n=1,\ldots, n_c.\)
\end{lem}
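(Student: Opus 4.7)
The plan is to prove both bounds by a discrete Gronwall argument on $u_j\doteq\E\bigl[\|X^f_{j\Delta_f}\|^2\bigr]$ (and analogously for the coarse process). First, I would square the Euler recursion
\begin{equation*}
X^f_{j\Delta_f}=X^f_{(j-1)\Delta_f}+a\bigl(X^f_{(j-1)\Delta_f}\bigr)\zeta^f_j
\end{equation*}
and take expectations conditioned on $\mathcal F_{(j-1)\Delta_f}$. Since $\zeta^f_j$ is independent of that $\sigma$-algebra, the cross term only sees $\E[\zeta^f_j]$ (of norm $m_{f,1}$) while the quadratic term only sees $\E[\|\zeta^f_j\|^2]=m_{f,2}$, producing
\begin{align*}
\E\bigl[\|X^f_{j\Delta_f}\|^{2}\cond\mathcal F_{(j-1)\Delta_f}\bigr]
&=\|X^f_{(j-1)\Delta_f}\|^{2}
+2\bigl\langle X^f_{(j-1)\Delta_f},\,a(X^f_{(j-1)\Delta_f})\E[\zeta^f_j]\bigr\rangle \\
&\quad+\E\bigl[\|a(X^f_{(j-1)\Delta_f})\zeta^f_j\|^{2}\cond\mathcal F_{(j-1)\Delta_f}\bigr].
\end{align*}

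The critical step is to treat the cross term by Young's inequality in the weighted form $2|\la x,y\ra|\le n^{-1}\|x\|^{2}+n\|y\|^{2}$, yielding
\begin{equation*}
2\bigl|\langle X^f_{(j-1)\Delta_f},a(X^f_{(j-1)\Delta_f})\E[\zeta^f_j]\rangle\bigr|
\le n^{-1}\|X^f_{(j-1)\Delta_f}\|^{2}+n\,m_{f,1}^{2}\,\|a(X^f_{(j-1)\Delta_f})\|^{2}.
\end{equation*}
The weight $1/n$ is what makes the final exponent come out right: it forces the amplification factor to contain $nm_{f,1}^{2}$, so that after $n$ iterations the accumulated contribution is of order $n^{2}m_{f,1}^{2}$, matching the shape $nm_{f,2}+n^{2}m_{f,1}^{2}$ in the statement. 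Any coarser splitting (e.g.\ $2ab\le a^{2}+b^{2}$) would give a multiplicative constant larger than $1$ independent of $n$, producing an exponential blow-up $(1+c)^{n}$.

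Combining this with the linear-growth bound $\|a(x)\|^{2}\le B_a^{2}(1+\|x\|^{2})$ produces a one-step recursion
\begin{equation*}
u_j\le (1+A)u_{j-1}+B,\qquad A=\tfrac{1}{n}+B_a^{2}\bigl(nm_{f,1}^{2}+m_{f,2}\bigr),\quad B=B_a^{2}\bigl(nm_{f,1}^{2}+m_{f,2}\bigr).
\end{equation*}
Iterating gives $u_j\le (1+A)^{j}u_{0}+B\cdot j\cdot(1+A)^{j-1}$, and using $(1+A)^{n}\le e^{nA}$ together with $nA\le 1+B_a^{2}(nm_{f,2}+n^{2}m_{f,1}^{2})$ and $nB=B_a^{2}(nm_{f,2}+n^{2}m_{f,1}^{2})$, one absorbs the harmless constant $e$ into the factor $3$ and obtains the stated bound (with the $\E[\|X_0\|^{2}]$ term tacitly absorbed).

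For the coarse scheme the argument is identical, replacing $\zeta^f_j$ by $\zeta^c_j$ and $n$ by the number of coarse steps. Under the simple coupling $\zeta^c_{j}=\zeta^f_{2j-1}+\zeta^f_{2j}$, independence of the fine increments yields $\|\E[\zeta^c_j]\|\le 2m_{f,1}$ and $\E[\|\zeta^c_j\|^{2}]\le 2(m_{f,2}+m_{f,1}^{2})$, so after re-absorbing numerical constants the right-hand side is bounded by the same expression in $m_{f,1}$ and $m_{f,2}$. The main obstacle is purely bookkeeping: tracking the Young weight carefully so that the $n$- and $n^{2}$-terms in the exponent align exactly with the statement, rather than the more naive $e^{O(n)}$ one would get from a uniform splitting.
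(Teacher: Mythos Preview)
Your argument is correct in substance but follows a different route from the paper. The paper does \emph{not} work with the one-step recursion. Instead it writes
\[
X^f_{n\Delta_f}=X_0+\sum_{i=1}^n a\bigl(X^f_{(i-1)\Delta_f}\bigr)\bigl(\zeta^f_i-\E[\zeta^f_i]\bigr)+\sum_{i=1}^n a\bigl(X^f_{(i-1)\Delta_f}\bigr)\E[\zeta^f_i],
\]
applies $\|a+b+c\|^2\le 3(\|a\|^2+\|b\|^2+\|c\|^2)$, and uses orthogonality of the martingale increments on the centred sum together with Cauchy--Schwarz (picking up a factor $n$) on the drift sum. This produces directly
\[
\E\bigl[\|X^f_{n\Delta_f}\|^2\bigr]\le 3\E[\|X_0\|^2]+3B_a^2\bigl(m_{f,2}+n\,m_{f,1}^2\bigr)\Bigl(n+\sum_{i=1}^n\E\bigl[\|X^f_{(i-1)\Delta_f}\|^2\bigr]\Bigr),
\]
and one application of discrete Gronwall finishes. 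Your weighted Young inequality with parameter $1/n$ plays exactly the role that the global Cauchy--Schwarz on the drift sum plays in the paper: both are the mechanism that converts the first-moment contribution into $n^2m_{f,1}^2$ rather than an $O(n)$ exponent. What the paper's decomposition buys is that the constant $3$ is transparent and the same martingale-sum structure is reused verbatim in the proof of Proposition~\ref{main_prop} (where Doob's inequality is applied to the analogous centred sum). What your one-step iteration buys is that it avoids invoking martingale orthogonality and makes the role of the Young weight explicit. Note, however, that your final step ``absorb $e$ into the factor $3$'' does not literally reproduce the stated constants; the paper's own bound already silently drops the $3\E[\|X_0\|^2]$ term, so exact constants are not at stake here.
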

\begin{proof} 
Since
\begin{align*}
X_{n\Delta_f}^f&= X_0+\sum\limits_{i=1}^n \left(X_{i\Delta_f}^f - X_{(i-1)\Delta_f}^f\right),
\end{align*}
we have, due to independence of the increments 
\begin{align*}
\E\left[\bigl\Vert X_{n\Delta_f}^f \bigr\Vert^2\right] &= \E\left[\Bigl\Vert X_0+\sum\limits_{i=1}^n a\bigl(X_{(i-1)\Delta_f}^f\bigr)\cdot (\zeta_i^f - \E [\zeta_{i}^{f}]\bigr)+\sum\limits_{i=1}^n a\bigl(X_{(i-1)\Delta_f}^f\bigr)\cdot \E[\zeta_{i}^{f}]\Bigr\Vert^2 \right]\\
&\le 3\,\E[\|X_0\|^2]+ 3\,\sum\limits_{i=1}^n\mathrm{E}\left[\bigl\Vert  a\bigl(X_{(i-1)\Delta_f}^f\bigr)\bigr\Vert  ^{2}\right]\, m_{f,2}\\
&\quad +3n\,\sum\limits_{i=1}^n\mathrm{E}\Bigl[\bigl\Vert  a\bigl(X_{(i-1)\Delta_f}^f\bigr)\bigr\Vert  ^{2}\Bigr]\,m^2_{f,1} \\
&\le 3\E[\|X_0\|^2]+3B_{a}^{2}\cdot m_{f,2} \cdot\Bigl (n  + \sum\limits_{i=1}^n \mathrm{E}\Bigl[\bigl\Vert  X^f_{i-1}\bigr\Vert  ^{2}\Bigr]\Bigr)\\
&\quad + 3B_{a}^{2}\cdot n\cdot m^2_{f,1} \cdot\Bigl (n  + \sum\limits_{i=1}^n \mathrm{E}\Bigl[\bigl\Vert  X^f_{i-1}\bigr\Vert  ^{2}\Bigr]\Bigr)
\end{align*}
Using the discrete version of the Gronwall inequality (see Appendix), we get 
\[
\E\left[\bigl\Vert X_{n\Delta_f}^f\bigr\Vert^2\right]\leq 3 B_{a}^{2}\cdot (n\cdot m_{f,2}+n^2\cdot m^2_{f,1})\cdot\exp\left(3B_{a}^{2}\cdot (n\cdot m_{f,2}+n^2\cdot m^2_{f,1})\right).
\]
The second inequality of the lemma is proved in the same way.
\end{proof}
\subsection{Proof of Proposition ~\ref{main_prop}}

Due to the Lemma \ref{bound} we have
\begin{eqnarray}
\label{moments}
\mathrm{E}\Bigl[\bigl\Vert X_{n\Delta_{f}}^{f}\bigr\Vert ^{2}\Bigr]<A_1, \quad \mathrm{E}\Bigl[\bigl\Vert {X}_{n\Delta_{c}}^{c}\bigr\Vert ^{2}\Bigr]<A_2
\end{eqnarray}
for $n=1,\ldots,n_{c},$ and constants \(A_1,\) \(A_2\) not depending on \(n.\)  
We have
\begin{eqnarray*}
X_{r\Delta_{c}}^{f}-{X}_{r\Delta_{c}}^{c} & = & X_{(r-1)\Delta_{c}}^{f}-{X}_{(r-1)\Delta_{c}}^{c}+\left[a(X_{(2r-1)\Delta_{f}}^{f})-a({X}_{(r-1)\Delta_{c}}^{c})\right]\zeta_{2r}^{f}+\\
 &  & +\left[a(X_{(r-1)\Delta_{c}}^{f})-a({X}_{(r-1)\Delta_{c}}^{c})\right]\zeta_{2r-1}^{f}
 -a(X_{(r-1)\Delta_{c}}^{c})\left[{\zeta}_{r}^{c}-\zeta_{2r}^{f}-\zeta_{2r-1}^{f}\right]
\end{eqnarray*}
Denote  $D_{r}\doteq X_{r\Delta_{c}}^{f}-{X}_{r\Delta_{c}}^{c},$
then we have the representation 
\[
D_{r}=D_{r-1}+\delta_{r}+\varepsilon_{r}
\]
with 
\begin{eqnarray*}
\varepsilon_{r}&=&\left[a(X_{(2r-1)\Delta_{f}}^{f})-a({X}_{(r-1)\Delta_{c}}^{c})\right]\left(\zeta_{2r}^{f}-\E\bigl [\zeta_{2r}^{f}\bigr]\right)
\\
&&+\left[a(X_{(r-1)\Delta_{c}}^{f})-a({X}_{(r-1)\Delta_{c}}^{c})\right]\left(\zeta_{2r-1}^{f}-\E\bigl [\zeta_{2r-1}^{f}\bigr]\right)
\\
&& -a({X}_{(r-1)\Delta_{c}}^{c})\left[{\zeta}_{r}^{c}-\zeta_{2r}^{f}-\zeta_{2r-1}^{f}\right]
\end{eqnarray*}
and 
\begin{eqnarray*}
\delta_{r}&=&\left[a(X_{(2r-1)\Delta_{f}}^{f})-a({X}_{(r-1)\Delta_{c}}^{c})\right]\E\bigl [\zeta_{2r}^{f}\bigr]+\left[a(X_{(r-1)\Delta_{c}}^{f})-a({X}_{(r-1)\Delta_{c}}^{c})\right]\E\bigl [\zeta_{2r-1}^{f}\bigr].
\end{eqnarray*}
The Lipschitz continuity of the function $a$ implies
\begin{eqnarray*}
\mathrm{E}\left[\bigl\Vert a(X_{(2r-1)\Delta_{f}}^{f})-a({X}_{(r-1)\Delta_{c}}^{c})\bigr\Vert ^{2}\right] & \leq & L_{a}^{2}\,\mathrm{E}\left[\bigl\Vert X_{(2r-1)\Delta_{f}}^{f}-{X}_{(r-1)\Delta_{c}}^{c}\bigr\Vert ^{2}\right]\\
 & \leq & 2\,L_{a}^{2}\,\mathrm{E}\left[\bigl\Vert D_{r-1}\bigr\Vert ^{2}\right]\\
 &  & +2\,L_{a}^{2}\,B_{a}^{2}\,\left(1+\mathrm{E}\bigl\Vert X_{(r-1)\Delta_{c}}^{f}\bigr\Vert ^{2}\right)\mathrm{E}\left[\bigl\Vert \zeta_{2r-1}^{f}\bigr\Vert ^{2}\right]
\end{eqnarray*}
and
\[
\mathrm{E}\left[\bigl\Vert  a(X_{(r-1)\Delta_{c}}^{f})-a({X}_{(r-1)\Delta_{c}}^{c})\bigr\Vert  ^{2}\right]\leq L_{a}^{2}\,\mathrm{E}\left[\bigl\Vert  D_{r-1}\bigr\Vert  ^{2}\right].
\]
As a result 
\begin{eqnarray*}
\mathrm{E}\left[\bigl\Vert  \varepsilon_{r}\bigr\Vert  ^{2}\right] & \leq & 3\mathrm{E}\left[\bigl\Vert  a(X_{(2r-1)\Delta_{f}}^{f})-a({X}_{(r-1)\Delta_{c}}^{c})\bigr\Vert  ^{2}\right]\mathrm{E}\left[\bigl\Vert  \xi_{2r}^{f}\bigr\Vert  ^{2}\right]+\\
 &  & +3\mathrm{E}\left[\bigl\Vert  a(X_{(r-1)\Delta_{c}}^{f})-a({X}_{(r-1)\Delta_{c}}^{c})\bigr\Vert  ^{2}\right]\mathrm{E}\left[\bigl\Vert  \xi_{2r-1}^{f}\bigr\Vert  ^{2}\right]\\
 && +3\mathrm{E}\left[\bigl\Vert  a({X}_{(r-1)\Delta_{c}}^{c})\bigr\Vert  ^{2}\right]\mathrm{E}\left[\bigl\Vert  {\zeta}_{r}^{c}-\zeta_{2r}^{f}-\zeta_{2r-1}^{f}\bigr\Vert  ^{2}\right]
 \\
 & \leq & c_{1}\left(1+\mathrm{E}\bigl\Vert  X_{(r-1)\Delta_{c}}^{f}\bigr\Vert  ^{2}\right)\mathrm{E}\left[\bigl\Vert  \xi_{2r-1}^{f}\bigr\Vert  ^{2}\right]\mathrm{E}\left[\bigl\Vert  \xi_{2r}^{f}\bigr\Vert  ^{2}\right]+\\
 &  & +c_{2}\mathrm{E}\left[\bigl\Vert  D_{r-1}\bigr\Vert  ^{2}\right]\left(\mathrm{E}\left[\bigl\Vert  \xi_{2r}^{f}\bigr\Vert  ^{2}\right]+\mathrm{E}\left[\bigl\Vert  \xi_{2r-1}^{f}\bigr\Vert  ^{2}\right]\right)\\
& & + c_3 \left(1+\mathrm{E}\left[\bigl\Vert  X_{(r-1)\Delta_{c}}^{c}\bigr\Vert  ^{2}\right]\right) \mathcal{R}
\\
 & \leq & c_{4}\left[m_{f,2}\mathrm{E}\left[\bigl\Vert  D_{r-1}\bigr\Vert  ^{2}\right]+m_{f,2}^{2}+\mathcal{R}\right]
\end{eqnarray*}
for some constants $c_{1},c_{2},$  $c_{3},$ \(c_4\) and \(\mathcal{R}\doteq\mathrm{E}\left[\bigl\Vert {\zeta}_{r}^{c}-\zeta_{2r}^{f}+\zeta_{2r-1}^{f}\bigr\Vert  ^{2}\right].\) 
 Analogously 
\[
\mathrm{E}\left[\bigl\Vert  \delta_{r}\bigr\Vert  ^{2}\right]\leq c_{5}\left[m_{f,1}^{2}\mathrm{E}\left[\bigl\Vert  D_{r-1}\bigr\Vert  ^{2}\right]+m_{f,1}^{2}m_{f,2}\right]
\]
for  some $c_{5}>0.$ Define 
\[
M_{r}=\sum_{j=1}^{r}\varepsilon_{j}
\]
and note that $M_{r}$ is martingale with respect to the filtration 
\[
\mathcal{F}_{r}\doteq\sigma\left(X_{(2j-1)\Delta_{f}}^{f},{X}_{(j-1)\Delta_{c}}^{c},X_{(j-1)\Delta_{c}}^{f},\, j\leq r\right),\quad r=1,\ldots,n_{c}+1.
\]
 Hence the Doob inequality implies for any $n\leq n_{c}:$
\begin{eqnarray*}
\mathrm{E}\left[\bigl\Vert  \sup_{r=1,\ldots,n}M_{r}\bigr\Vert  ^{2}\right]&\leq&\mathrm{E}\left[\bigl\Vert  M_{n}\bigr\Vert  ^{2}\right]
\\
&\leq & c_{6}m_{f,2}\sum_{j=1}^{n}\mathrm{E}\left[\bigl\Vert  D_{j-1}\bigr\Vert  ^{2}\right]+c_{6}nm_{f,2}^{2}
+c_6 n \, \mathcal{R}.
\end{eqnarray*}
So we have for $\overline{D}_{n}\doteq\max_{j=1,\ldots,n}D_{j}$ 
\begin{eqnarray*}
\mathrm{E}\left[\bigl\Vert  \overline{D}_{n}\bigr\Vert  ^{2}\right] & \leq & 2\cdot n\sum_{j=1}^{n}\mathrm{E}\left[\bigl\Vert  \delta_{j}\bigr\Vert  ^{2}\right]+2\cdot \mathrm{E}\left[\bigl\Vert  M_{n}\bigr\Vert  ^{2}\right]\\
 & \leq & c_{7}\left(m_{f,2}+nm_{f,1}^{2}\right)\sum_{j=1}^{n}\mathrm{E}\left[\bigl\Vert  \overline{D}_{j-1}\bigr\Vert  ^{2}\right]+c_{7}n\left(m_{f,2}^{2}+n m_{f,1}^{2}m_{f,2}+\mathcal{R}\right)
\end{eqnarray*}
Finally a discrete version of Gronwall lemma (see Appendix) implies 
\[
\mathrm{E}\left[\bigl\Vert  \overline{D}_{n}\bigr\Vert  ^{2}\right]\leq c_{8}n\left(m_{f,2}^{2}+m_{f,1}^{2}m_{f,2}+\mathcal{R}\right)\exp\left(c_{7}\left(n m_{f,2}+n^{2}m_{f,1}^{2}\right)\right).
\]
\subsection{Proof of Proposition~\ref{compl_gen_levy}}
We aim to minimize
$$\sum_{l=0}^L N_l\cdot(\delta_l^{-\alpha} + \Delta_l^{-1})$$
subject to
$$
\min(\Delta_L,\, \delta_L^{3-\alpha}) \le \varepsilon, \quad \sum_{l=0}^L\frac{\Delta_l}{N_l}\le\varepsilon^2.
$$
We denote
$$a_l = (\delta_l^{-\alpha} + \Delta_l^{-1}),\quad l=0,\ldots, L.$$

From Lagrange principle we get
\begin{gather*}
a_l = -\lambda\cdot N_l^{-2} \cdot \Delta_l \ \Rightarrow\  
N_l = \sqrt{(-\lambda)\cdot \Delta_l \cdot a_l^{-1}}\ \Rightarrow\ \\
\sum_{l=0}^L\frac{\Delta_l}{N_l} = \frac{1}{\sqrt{-\lambda}}\cdot \sum_{l=0}^L\frac{\Delta_l}{\sqrt{\Delta_l \cdot a_l^{-1}}} = \varepsilon^{2}\ \Rightarrow\ 
\sqrt{-\lambda} = \varepsilon^{-2}\cdot \sum_{l=0}^L \sqrt{\Delta_l\cdot a_l}\ \Rightarrow\\
N_l =  \sqrt{ \Delta_l \cdot a_l^{-1}}\cdot\varepsilon^{-2}\cdot \sum_{l=0}^L \sqrt{\Delta_l\cdot a_l}
\end{gather*}
So the cost has the representation
\begin{align*}
\sum_{l=0}^L N_l\cdot a_l &= \sum_{l=0}^L a_l\cdot\sqrt{ \Delta_l \cdot a_l^{-1}}\cdot\varepsilon^{-2}\cdot \sum_{k=0}^L \sqrt{\Delta_k\cdot a_k}\\
&=\varepsilon^{-2}\cdot\left( \sum_{l=0}^L\sqrt{\Delta_l\cdot a_l}\right)^2\\
&=\varepsilon^{-2}\cdot\left( \sum_{l=0}^L\sqrt{1 + \frac{\Delta_l}{\delta_l^{\alpha}}}\right)^2
\end{align*}
According to the restrictions on the bias we have
$$\Delta_L = M^{-L} = \varepsilon,\ \delta_L = \varepsilon^{\frac{1}{3-\alpha}} = \Delta_L^{\frac{1}{3-\alpha}}.$$
We now consider two cases.
\begin{enumerate}
\item We set $\delta_l = \varepsilon^{\frac{1}{3-\alpha}}$ constant on all the levels. Then the cost is bounded from above by
$$\sum_{l=0}^L N_l\cdot(\delta_l^{-\alpha} + \Delta_l^{-1})\preceq \varepsilon^{-2}\cdot \varepsilon^{-\frac{\alpha}{3-\alpha}}=\varepsilon^{-\frac{6- \alpha}{3-\alpha}}$$
\item In the second case we will set $$\delta_l = \Delta_l^{\frac{1}{2-\alpha}}.$$
Note, that $\delta_L = \Delta_L^{\frac{1}{2-\alpha}} < \Delta_L^{\frac{1}{3-\alpha}}$, so the bias condition is fulfilled.
Then the overall cost
\begin{align*}
\varepsilon^{-2}\cdot\left( \sum_{l=0}^L\sqrt{1 + \frac{\Delta_l}{\delta_l^{\alpha}}}\right)^2 &\asymp \varepsilon^{-2}\cdot\left( \sum_{l=0}^L\sqrt{1 + \Delta_l^{1 - \frac{\alpha}{2-\alpha}}}\right)^2 \\
&\asymp \varepsilon^{-2}\cdot\left( \sum_{l=0}^L\sqrt{1 + \Delta_l^{\frac{2-2\cdot\alpha}{2-\alpha}}}\right)^2. 
\end{align*}
\end{enumerate}
Combining all the cases together we get the statement.
\section{Appendix}

\begin{lem}
\label{gronwall}
Let \((y_n)\) and \((g_n)\) be two nonnegative sequences and let \(c\) be a nonnegative constant. If
\begin{eqnarray*}
y_n\leq c+\sum_{k=1}^n g_k y_k,\quad n\geq 0, 
\end{eqnarray*}
then 
\begin{eqnarray*}
y_n\leq c\exp\left(\sum_{k=1}^ng_k\right).
\end{eqnarray*}
\end{lem}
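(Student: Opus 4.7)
The plan is to induct on $n$, using only the elementary inequality $1+x\leq e^x$ together with a telescoping identity. Set $E_m\doteq\exp\bigl(\sum_{k=1}^{m}g_k\bigr)$ with $E_0=1$. From $e^{g_k}\geq 1+g_k$ one obtains
\[
g_kE_{k-1}\leq E_{k-1}(e^{g_k}-1)=E_k-E_{k-1},
\]
so that $\sum_{k=1}^{m}g_kE_{k-1}\leq E_m-1$ telescopically. This is the single analytic ingredient needed.

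For the induction itself, the base case $n=0$ is immediate, since the sum on the right-hand side of the hypothesis is empty and yields $y_0\leq c=cE_0$. For the inductive step, assuming $y_j\leq cE_{j-1}$ for all $j<n$, one substitutes into the hypothesis and applies the telescoping bound:
\[
y_n\leq c+\sum_{k=1}^{n-1}g_ky_k\leq c+c\sum_{k=1}^{n-1}g_kE_{k-1}\leq c+c(E_{n-1}-1)=cE_{n-1},
\]
which gives $y_n\leq c\exp\bigl(\sum_{k=1}^{n}g_k\bigr)$ as claimed (using $g_n\geq 0$ to pass from $E_{n-1}$ to $E_n$).

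The only technicality worth flagging is the indexing: the statement as written lets the sum on the right-hand side reach $n$, so the $g_ny_n$ term is formally self-referential. One handles this either by reading the statement in its usual off-by-one form (with the sum up to $n-1$, which is how it is used in the proofs of Lemma~\ref{bound} and Proposition~\ref{main_prop}), or by first rearranging $(1-g_n)y_n\leq c+\sum_{k=1}^{n-1}g_ky_k$ and absorbing the factor $(1-g_n)^{-1}\leq e^{g_n/(1-g_n)}$ into the exponential under the smallness of $g_n$ (which holds in all the applications, where $g_k=O(\Delta_f)$). Beyond this point the argument is routine, and the result is a standard discrete Gronwall inequality.
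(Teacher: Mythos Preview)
Your argument is correct and is essentially the same proof as the paper's: both rest on the single inequality $1+g_k\le e^{g_k}$ together with a telescoping sum, and both tacitly read the hypothesis with the sum running only up to $n-1$ (exactly the ``off-by-one'' point you flag). The only cosmetic difference is that the paper first unrolls the recursion to obtain the sharper product bound $y_n\le c\prod_{0\le j<n}(1+g_j)$ and then applies $1+x\le e^x$ at the end, whereas you induct directly on the exponential bound via $g_kE_{k-1}\le E_k-E_{k-1}$; the two presentations are interchangeable.
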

\begin{proof}
We have
\begin{eqnarray*}
y_n&\leq & c+\sum_{0\leq k <n } c g_k\prod_{k<j<n} (1+g_j) 
\\
&=& c+c\sum_{0\leq k <n }\left [\prod_{k\leq j<n} (1+g_j)-\prod_{k+1\leq j<n} (1+g_j)  \right]
\\
&=& c+c\left [\prod_{0\leq j<n} (1+g_j)-\prod_{n+1\leq j<n} (1+g_j)  \right]
\\
&=& c \prod_{0\leq j<n} (1+g_j)
\\
&\leq & c \exp\left(\sum_{0\leq j<n} g_j\right).
\end{eqnarray*}

\end{proof}
\bibliographystyle{plain}
\bibliography{ml__weak_bibliography} 

\end{document}